\newtheorem{theorem}{Theorem}
\newtheorem{proposition}[theorem]{Proposition}
\newtheorem{lemma}[theorem]{Lemma}
\newtheorem{definition}[theorem]{Definition}
\newtheorem{corollary}[theorem]{Corollary}
\newtheorem{remark}[theorem]{Remark}
\newenvironment{proof}[1][Proof]{\noindent\textbf{#1.} }{{\hfill}{\small$\square$}}\textheight21cm \topmargin0mm\textwidth156mm
\newcommand\unit{\hbox{\rm 1\kern-2.8truept l}}
\newcommand\Lform{{\mathcal{L}}\kern-7.56pt\raise1.0pt\hbox{$-$}}
\newcommand{\initsp}{\mathsf{h}}
\begin{document}

\title{A CYCLE DECOMPOSITION AND ENTROPY PRODUCTION FOR CIRCULANT QUANTUM MARKOV SEMIGROUPS}

\author{JORGE R. BOLA$\tilde{\textrm{N}}$OS-SERVIN $^{*}$, ROBERTO QUEZADA$^{**}$ }

\maketitle

\noindent{Universidad Aut\'onoma Metropolitana, Iztapalapa Campus \\
Av. San Rafael Atlixco 186, Col. Vicentina\\ 09340 Iztapalapa D.F., Mexico.\\
$^*$E-mail: kajito@gmail.com \\
 $^{**}$E-mail: roqb@xanum.uam.mx } \\ \\

\noindent{Submitted to: \\ ``Infinite Dimensional Analysis, Quantum Probability and Related To\-pi\-cs"}

\begin{abstract}  We propose a definition of cycle representation for Quantum Markov Semigroups (qms) and Quantum  Entropy Production Rate (QEPR) in terms of the $\rho$-adjoint. We  introduce the class of circulant qms, which admit non-equilibrium steady states but exhibit symmetries that allow us to compute  explicitly the QEPR, gain a deeper insight into the notion of cycle decomposition and prove that quantum detailed balance holds if and only if the QEPR equals zero.
\end{abstract}

\noindent{\textbf{Keywords:} Non-equilibrium steady state, circulant quantum Markov semigroup, quantum cycle representation, entropy production rate, weighted detailed balance.} \\  \\
\noindent{\textbf{{AMS Subject Classification:} 46L55, 82C10, 60J27}

\section{Introduction}
The notion of equilibrium state of physical systems is well understood and there exist several conditions that  characterize such states, detailed balance and zero entropy production among them. For classical Markov chains the equivalence of these two equilibrium criteria has been proved by Qian et al.\cite{qian} using Kalpazidou's cycle representation for Markov chains\cite{kalpaz}. Non-equilibrium state is a much more subtle notion, since there are a huge variety of behaviors involved in it.  

This work is aimed at contributing to the program outlined in Reference\cite{a-f-q}, namely, \textit{to look for some interesting Gorini-Kossakowski-Sudarshan and Lindblad (GKSL) generators with properties that are rich enough to go beyond the equilibrium situation, but concrete enough to allow explicit study and, in some cases, explicit solutions}. We define Quantum Entropy Production Rate (QEPR) for qms in terms of the $\rho$-adjoint and discuss its connection with Fagnola-Rebolledo's\cite{fag-reb} definition. We propose a definition of cycle representation for Completely Positive (CP) maps and GKSL generators, discussing its connections with the QEPR. To test and illustrate the above notions, we introduce the class of circulant qms that admit non-equilibrium steady states but exhibit pretty symmetries which allow us explicit computation of the QEPR. The symmetry properties of our semigroups arise from an abelian group structure on the state space of the associated classical Markov chain. 

Section 2 is a brief review of quantum detailed balance and its extensions. Our QEPR definition along with some basic properties are discussed in Section \ref{entropy-pro-qms}. A brief review of cycles and passage matrices is made in Section \ref{cycles} . In Section \ref{circulant-matrices} we show how the transition probability matrix of a Markov chain on a finite abelian group is a circulant matrix, which is the leading concept of this article. Section \ref{circ-qms} offers a quantum generalization of the former, named circulant operator, and we use it to define circulant qms; also in this section we propose a definition of quantum cycle representation for CP maps and GKSL generators. In Section \ref{QEPR} and {\ref{QEPR-EPR-compare}, both QEPR and classical EPR are  explicitly computed and compared for a circulant qms and its classical restriction using a diagonal invariant state. The remaining invariant states and its QEPR are studied in Section \ref{other-invariant-states} .

\section{Preliminaries} 

\subsection{Quantum detailed balance}

For uniformly continuous qms on ${\mathcal B}(\initsp)$, with $\initsp$ a separable Hilbert space, a notion of detailed balance was introduced first by A\-lic\-ki\-\cite{Ali} and Frigerio-Gorini-Kossakowski-Verri\cite{F-G-K-V}. Indeed, a qms with GKSL generator ${\mathcal L}$ satisfies a quantum detailed balance condition in the sense of Ref.\cite{Ali,F-G-K-V} with respect to a stationary state $\rho$ (i.e., $tr\big(\rho{\mathcal L}(x)\big)=0, \; \forall \; x\in{\mathcal B}(\initsp)$), if there exists an operator $\tilde{\mathcal L}$ on $\mathcal B(\initsp)$ and a self-adjoint operator $K$ on $\initsp$ such that for all $x,y\in \mathcal B(\initsp)$ the following relations hold:
\begin{eqnarray}\label{0-detailed-balance}
\begin{aligned}
& tr(\rho\tilde{\mathcal L}(x)y)=tr(\rho x \mathcal L (y)), \\  
& \tilde{\mathcal L}(\cdot) - \mathcal L(\cdot) =2i [K, \cdot].
\end{aligned}
\end{eqnarray} 

The operator $\tilde{\mathcal L}$, is called the $\rho$-adjoint of ${\mathcal L}$. For a wide class of GKSL generators, including those deduced from the stochastic limit of quantum theory, the $\rho$-adjoint coincides with the time-reversed  generator if quantum detailed balance holds. Therefore, $\tilde{\mathcal L}$ can be considered as an extension of the time-reversed GKSL generator to the non-equilibrium situation and we expect that simple non-equilibrium situations should appear when studying the difference between ${\mathcal L}$ and $\tilde{\mathcal L}$, see Accardi-Fagnola-Quezada\cite{a-f-q} and the references therein. 

Other notions of quantum detailed balance have been introduced by Fagnola and Umanit$\grave{\textrm{a
}}$\cite{franco2,franco3}. The main idea is to separate the invariant state $\rho$ into two pieces or, equivalently, define the $\rho$-adjoint using the inner product $\langle a,b\rangle_s =tr(\rho^{1-s} a^*\rho^sb)$ for $0\leq s\leq\frac{1}{2}$, and replace relations (\ref{0-detailed-balance}) by 
\begin{eqnarray*} \label{s-detailed-balance}
\begin{aligned}
& tr(\rho^{1-s}\tilde{\mathcal L}(x)\rho^{s}y)=tr(\rho^{1-s} x \rho^{s} \mathcal L (y)), \\ &
\tilde{\mathcal L}(\cdot) - \mathcal L(\cdot) =2i [K, \cdot]. 
\end{aligned}
\end{eqnarray*} Due to the non-commutativity, these two definitions are not equivalent in general. Clearly, detailed balance in the sense of (\ref{0-detailed-balance}) corresponds with the case $s=0$ in (\ref{s-detailed-balance}). 

Notice that if $\rho$ is a stationary state for ${\mathcal L}$, hence with $y=\unit$ in (\ref{0-detailed-balance}) and using that ${\mathcal L}(\unit)=0$ we get \[0=tr\big(\rho x{\mathcal L}(\unit)\big)=tr\big(\rho\tilde{\mathcal L} (x)\big), \; \forall \; x\in{\mathcal B}(\initsp).\] Therefore, $\rho$ is a stationary state also for $\tilde{\mathcal L}$.

\subsection{The $\rho$-adjoint and special representations}
The $\rho$-adjoint (with $s=0$) $\tilde{\mathcal L}$ of a GKSL generator ${\mathcal L}$ is a GKSL generator if and only if  the last one commutes with the modular automorphism of $\rho$, i.e., ${\mathcal L}\circ \sigma_{-i}=\sigma_{-i}\circ {\mathcal L}$, where $\sigma_{-i}(a)=\rho a\rho^{-1}$, see Theorem 8 in Reference\cite{franco3}.  

The Markov generators can be written in the standard Gorini-Kossakowski-Sudarshan 
and Lindblad (GKSL) representation  
\begin{eqnarray}\label{gksl-rep}
{\mathcal L}(x)= i[H, x] - \frac{1}{2} \sum_{k\geq 1} 
\left( L_{k}^{*} L_{k} x - 2 L_{k}^{*} x L_{k} 
+ x L_{k}^{*} L_k \right),
\end{eqnarray} 
where $H, \; L_k \in {\mathcal B}({\mathcal H})$ with 
$H=H^{*}$ and the series $\sum_{k\geq 1} L_{k}^{*} L_k $ 
is strongly convergent. 

Given a normal state $\rho$ on ${\mathcal B}({\mathcal H})$, a GKSL representation (\ref{gksl-rep}) of ${\mathcal L}$  by a bounded self-adjoint operator H and a finite or infinite sequence $(L_k )_{k\geq 1}$ of elements of ${\mathcal B}({\mathcal H})$ such that:
\begin{itemize}
\item[(i)] ${\rm tr}\left(\rho L_{k} \right)=0$ for each $k\geq 1$,
\item[(ii)] $\sum_{k\geq 1} L_{k}^{*}L_{k}$  is a strongly convergent sum,
\item[(iii)] if $\sum_{k\geq 0} |c_{k} |^{2}<\infty$ and $c_0  + \sum_{k\geq 1} {c_k} L_{k} =0$ for complex scalars $(c_k)_{k\geq 0}$, then $c_k = 0$ for every $k \geq 0$,  
\end{itemize} is called \textit{special}. See Theorem 30.16 in Parthasarathy's book\cite{partha} for a proof of the existence of these class of representations.  Special representations are unique up to unitary transformations.

\subsection{Weighted detailed balance}

The notion of weighted detailed balance introduced in Reference\cite{a-f-q}, was aimed at characterizing a class of GKSL generators with properties rich enough to go beyond the equilibrium situation but concrete enough to allow explicit study. In terms of special representations, weighted detailed balance is  stated as follows. 

A uniformly continuous quantum Markov semigroup $({\mathcal T}_{t})_{t\geq 0}$ satisfies a \textit{weighted detailed balance} condition with respect to a faithful invariant state $\rho$, if its generator ${\mathcal L}$ has a special GKSL representation by means of operators $H, L_{k}$, such that here exists a sequence of positive weights $q:=(q_{k})_{k}$ and operators $ {H'}, {L}_{k}^{'}$ of a (possibly another) special representation of ${\mathcal L}$ such that the difference $\tilde{\mathcal L}_{\rho}-{\mathcal L}$ has the structure  
\begin{eqnarray}\label{def-weighted-db}
\tilde{\mathcal L}_{\rho}-{\mathcal L} = -2i [K, \cdot] + \Pi, 
\end{eqnarray} where $K=K^{*}$ is bounded and 
\begin{eqnarray}\label{Pi-weighted-db}
\Pi(x)= \sum_{k} (q_{k} -1) {L}_{k}^{'*}x {L}_{k}^{'}. 
\end{eqnarray}  Quantum detailed balance holds if and only if $q_{k}=1$ for all $k$.

\section{Quantum Entropy Production Rate for quantum Markov semigroups}\label{entropy-pro-qms}
In this section we introduce a notion of Quantum Entropy Production based on the concept of $\rho$-adjoint. As well as detailed balance, our definition depends on which $\rho$-adjoint is used. Our definition is slightly different from the one introduced by Fagnola and Rebolledo\cite{}. Both definitions coincide in the class of circulant quantum Markov semigroups introduced in Section \ref{circ-qms} below.

Assume that ${\mathcal L}$ and its $\rho$-adjoint $\tilde{\mathcal L}$, are GKSL generators of strongly continuous qms ${\mathcal T}$ and $\tilde{\mathcal T}$, respectively, with an invariant state $\rho$. Let ${\mathcal T}_{*t}$ and $\tilde{\mathcal T}_{*t}$  denote the corresponding pre-dual semigroups.

 \begin{definition} For every $t\geq 0$, let $\Omega_{t}$ and $\tilde{\Omega}_{t}$ be the states (density matrices) on ${\mathcal B}( \initsp\otimes \initsp)$, with $\initsp$ a separable Hilbert space, given by 
 \[\Omega_{t}=\Big(	\unit\otimes {\mathcal T}_{*t}\Big)  \big(|\Omega_{\rho} \rangle \langle \Omega_{\rho} |\big) \]  and 
 \[\tilde{\Omega}_{t}=\Big(	\unit\otimes \tilde{\mathcal T}_{*t}\Big)  \big( |\Omega_{\rho} \rangle \langle \Omega_{\rho} | \big), \] where $\Omega_{\rho}= \displaystyle\sum_i  \rho_{i}^{\frac{1}{2}}(e_i \otimes e_i) \in  (h \otimes  h)$, with $(e_{i})_{1\leq i\leq p-1}$ the orthonormal basis of $\rho$ in $\initsp$ . The Quantum Entropy Production Rate of the uniformly continuous qms ${\mathcal T}_{*}$, with respect to the invariant state $\rho$, is given by  \[e_p({\mathcal T_*}, \rho)=\left. \frac{d}{dt}S(\Omega_{t}, \tilde{\Omega}_{t})\right| _{t=0}, \label{quan-ent}\] where the relative entropy of the states $\eta$ and $\rho$ is defined as 
\[S(\eta,\rho)=tr\Big(\eta \log \eta- \eta \log \rho\Big)\] if the nullspace of $\eta$ contains the nullspace of $\rho$ and $\infty$ otherwise.
\end{definition}

As a consequence of Klein's Inequality, see the work of B. Ruskai\cite{ruskai}, the relative entropy of every pair of states $\rho,\eta$ is non-negative
\[S(\eta,\rho)\geq 0. \] Moreover, equality holds if and only if $\eta=\rho$.

In the remaining sections we compute explicitly the Quantum Entropy Production Rate for circulant qms.  
 
 \begin{remark}
 \begin{itemize}
 \item[(i)] In the finite dimensional case $\Omega_{t}$ (resp. $\tilde{\Omega}_{t}$) is the so called Jamio\l kowski\cite{Jamiolkowski}, or Choi-Jamio\l kowski, transform  of the CP map ${\mathcal T}_{*t}\circ T_{\rho}$ (resp. $\tilde{\mathcal T}_{*t}\circ T_{\rho}$), with $T_{\rho}(x)=\rho^{\frac{1}{2}} x \rho^{\frac{1}{2}} $.  
 \item[(ii)]  A simple computation shows that $tr\Big(|\Omega_{\rho}\rangle\langle \Omega_{\rho}|\Big)=tr(\rho)=1$, hence $|\Omega_{\rho}\rangle\langle\Omega_{\rho}|$ is a state on ${\mathcal B}(\initsp\otimes\initsp)$ and $\Omega_{t}$ is well defined.
 \item[(iii)] In comparison to Fagnola-Rebolledo's definition of entropy production rate, we remark that in our definition, the Jamio\l kowski transform is not modified by an anti-unitary operator. Moreover, instead of forward and backward two-point states we use  as forward dynamics the time-dependent state generated by Jamio\l kowski transform of the semigroup $({\mathcal T}_{*t})_{t\geq 0}$ and as a backward dynamics the one associated with its $\rho$-adjoint $(\tilde{\mathcal T}_{*t})_{t\geq 0}$.
 
 \end{itemize}
 
 \end{remark}

\section{Cycles and passage functions} \label{cycles} Let $S$ be a numerable set and $c$ a periodic function from ${\mathbb Z}$ into the set $S$. Following the notations of Qian et al.\cite{qian}, we call the values $c(n)$ of $c$ vertices (or nodes) of $c$,  while the pairs $(c(n), c(n+1))$ are called edges (directed edges or directed arcs)  of $c$. The period of $c$ is the smallest integer $p$ such that $c(n+p)=c(n)$ for all $n\in{\mathbb Z}$. Two periodic functions $c$ and $c'$ are equivalent if one is a translation of the other, i.e., there exists $i\in{\mathbb Z}$ such that $c'(n)=c(n+i)$. The above is an equivalence relation and clearly two equivalent periodic functions have the same vertices and period. A \textbf{directed circuit} is an equivalence class of the above defined equivalence relation. Any directed circuit is determined either by its period $p$ and  any $(p+1)$-tuple $(i_0 , i_1, \cdots, i_{p})$ with $i_{p}=i_0$; or by its period $p$ and $p$ ordered pairs $(i_0, i_1), (i_1, i_2), \cdots , (i_{p-1}, i_{p})$ with $i_{p}=i_0$, where $i_l =c(n+l-1), \; 0\leq i_l \leq p-1$ for some $n\in{\mathbb Z}$.

\begin{definition}
The cycle (or directed cycle) associated with a given directed circuit $c=(i_0, i_1, \cdots, i_{p-1}, i_1), \;  p\geq 1$, with distinct vertices $i_0, i_1, \cdots i_{p-1}$, is the ordered sequence $\hat{c}=(i_0, i_1, \cdots i_{p-1})$.
\end{definition} Every cycle is invariant under cyclic permutation of its vertices. We also use the notation $\hat{c}=(c(0), c(1), \cdots, c(p-1))$ for the cycle associated with the directed circuit $c=(c(0), c(1), \cdots, c(p-1), c(0))$ of period $p$, and use the symbol $c$ for both the directed circuit and the cycle when no confusion is possible.  
For every directed circuit $c=(i_0,i_1,\ldots,i_{p-1},i_0)$, the \textbf{reverse} circuit $c_{-}$ is defined as $c_{-}=(i_0,i_{p-1},\ldots,i_1,i_0)$.

When all points of $c$ are distinct  except for the extremes, then 
\begin{eqnarray*}
\begin{aligned}
J_c(i,j)=\left\{ \begin{array}{ll}
         1 & \mbox{if $(i,j)$ is an edge of $c$};\\
        0 & \mbox{otherwise}.\end{array} \right.
\end{aligned}
\end{eqnarray*}       
        
Hence, every cycle $c$ has associated an unique matrix $J_{c}=(J_{c}(i,j))$, in some complex matrix space, called \textbf{passage matrix} of $c$. \\

\textbf{Example.} If $c_0=(0123)$ and $c_1=(0312)$, then 

\begin{align*}\begin{array}{cr}J_{c_0}=\left(\begin{array}{ccccc}0&1&0&0\\ 0&0&1&0\\ 0&0&0&1 \\1&0&0&0\end{array}\right) \textrm{and} & J_{c_1}=\left(\begin{array}{ccccc} 0&0&0&1\\0&0&1&0\\1&0&0&0\\0&1&0&0\end{array}\right) \end{array}\end{align*}

The passage matrix $J_{c_0}$ of the full length cycle $c_0=(0,1,2,\ldots,p-1)$ is often called \textit{the primary permutation matrix} and the cycle $c_0$ \textit{the primary cycle}. From now on $J_{p}$ will denote the primary permutation matrix in a $p\times p$ complex matrix space. Notice that given any cycle $c=(c(0),c(1), \cdots, c(p-1))$, its passage matrix can be written in terms of the canonical basis $\{|e_{i}\rangle\langle e_{j}| : 0\leq i,j\leq p-1\}$ of the $p\times p$ complex matrix space as \[J_{c}=\sum_{i=0}^{p-1} |e_{c(i)}\rangle\langle e_{c(i+1)}|,\] where $\{e_{j}\}_{0\leq j\leq p-1}$ is the canonical basis of ${\mathbb C}^{p}$.
Notice that $J_c$ moves the canonical basis of ${\mathbb C}^p$ according to the cycle $c$, i.e., $J_c e_{c(i)} = e_{c(i-1)}$ for all $i$. So, the primary permutation matrix $J_{p}$ is, in fact, the left shift operator for the canonical basis in ${\mathbb C}^p$.

\section{Circulant matrices}\label{circulant-matrices}
\subsection{Markov chains on finite groups} Let $(G,\circ)$ be a finite group. Unless otherwise specified, we let $p=|G|$ and denote by $hg$ the product $h\circ g$, $h,g\in G$. Given a probability distribution $\mu$ on $G$, the transition probabilities \[p(g, hg)=\mu(\{h\}),\] define a discrete time Markov chain on $G$. \\ 

\textbf{Example 1.} Consider the cyclic group ${\mathbb Z}_p =\{0,1, \cdots, p-1\}$ and any distribution probability $\alpha=\{\alpha_0, \alpha_1, \cdots, \alpha_{p-1}\}$ on ${\mathbb Z}_p$. Then the transition probability matrix is the circulant matrix  
 \begin{eqnarray*}
 \begin{aligned}
 A=\textrm{circ}(\alpha_0,\alpha_{1}, \alpha_{2}, \cdots , \alpha_{p-1} ) =\left(\begin{array}{ccccc}\alpha_0&\alpha_{1}&\alpha_{2}& \cdots & \alpha_{p-1} \\
                                      \alpha_{p-1} & \alpha_0 & \alpha_{1} & \cdots &\alpha_{p-2} \\
                                      \alpha_{p-2} & \alpha_{p-1} & \alpha_0 & \cdots & \alpha_{p-3} \\
                                      \vdots & \vdots & \vdots & \ddots & \vdots \\
                                      \alpha_{1} & \alpha_{2} & \alpha_{3} & \cdots & \alpha_0\end{array}\right).
                                      \end{aligned}
                                      \end{eqnarray*} 
                                      
                                       Notice that $A$ is a convex linear combination of powers of the primary permutation matrix $J_{p}=\sum_{j=0}^{p-1}|e_{j}\rangle\langle e_{j+1}|$;  indeed, 
                                       \begin{eqnarray}\label{birkhoff-rep}
                                       \begin{aligned}
                                       A=\sum_{j=0}^{p-1}\alpha_{j}J_{p}^{j}.
                                       \end{aligned}
                                       \end{eqnarray}

\textbf{Example 2.}  Let $G$ be the abelian group $G={\mathbb Z}_{p}\times{\mathbb Z}_{q}$ where the symbol $\times$ denotes direct product, with $p, \; q\geq 2$. We set the lexicographic order in ${\mathbb Z}_{p}\times{\mathbb Z}_{q}$ and take $\alpha=\{\alpha(0,0), \cdots, \alpha(0,q-1), \alpha(1, 0), \cdots, \alpha(1,q-1), \cdots , \alpha(p-1, 0), \cdots,\alpha(p-1, q-1)\}$ any probability distribution on $G$. One can easily see that the corresponding transition probability matrix is the block circulant matrix \[R=\textrm{circ}(R_{0}, R_{1}, \cdots, R_{p-1}),\] with circulant blocks 
\[R_{i}=\left(\begin{array}{ccccc}\alpha(i,0)&\alpha(i,1)&\alpha(i,2)& \cdots & \alpha(i, q-1) \\
                                      \alpha(i,q-1) & \alpha(i,0) & \alpha(i,1) & \cdots &\alpha(i,q-2) \\
                                      \alpha(i,q-2) & \alpha(i, q-1) & \alpha(i,0) & \cdots & \alpha(i,q-3) \\
                                      \vdots & \vdots & \vdots & \ddots & \vdots \\
                                      \alpha(i,1) & \alpha(i,2) & \alpha(i,3) & \cdots & \alpha(i,0)
                                      \end{array}\right), i=0,1,\cdots, p-1. \] 

The above matrix $R$ is a convex linear combination of tensor products of powers of the primary permutation matrices $J_{p}=\sum_{i=0}^{p-1}|e_{i}\rangle\langle e_{i+1}|$ and $J_{q}=\sum_{j=0}^{q-1}|e_{j}\rangle\langle e_{j+1}|$, indeed, 
\begin{eqnarray}\label{birkhoff-rep-product}
\begin{aligned}
R=\sum_{0\leq i\leq p-1,\, 0\leq j\leq q-1}\alpha(i,j) J_{p}^{i}\otimes J_{q}^{j}.
\end{aligned}
\end{eqnarray} 

\begin{remark}
Due to Birkhoff's Theorem, every bi-stochastic matrix is a convex  linear combination of permutation matrices. Notice that (\ref{birkhoff-rep}) and (\ref{birkhoff-rep-product}) are Birkhoff's representations of bi-stochastic circulant and block-circulant matrices, respectively.
\end{remark}


\subsection{Diagonalization of circulant matrices}\label{diag-circ-matrix}

The discrete (or quantum) Fourier transform on ${\mathbb C}^{p}$ is the unitary operator defined by means of 
\[F_{p}=\frac{1}{\sqrt{p}}\sum_{0\leq j,k\leq p-1}\omega_{p}^{kj}|e_{k}\rangle \langle e_{j}|,\] where $\omega_{p}$ is a primitive $p$-th root of unity and $\{e_{j}\}_{0\leq j\leq p-1}$ is the canonical basis of ${\mathbb C}^p$. Before proving the discrete Fourier transform diagonalizes $J_{p}$ we will need the next ortoghonality relation between the $p$-th roots of unity.

\begin{proposition} For every pair $i,k\in\{0,1,\ldots,p-1\}$ 
\begin{align*}\sum_{l=0}^{p-1}\omega_{p}^{(k-i)l}=p\delta_{ik}\end{align*}
\end{proposition}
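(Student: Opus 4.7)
The plan is to split into the two cases $k=i$ and $k\neq i$, and in the second case recognize the sum as a finite geometric series.

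First I would treat the diagonal case. If $k=i$, then $(k-i)l=0$ for every $l$, so $\omega_p^{(k-i)l}=1$ and the sum is exactly $p$, matching $p\delta_{ik}$.

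Next, for $k\neq i$ with both indices in $\{0,1,\ldots,p-1\}$, the integer $k-i$ is a nonzero element of $\{-(p-1),\ldots,-1,1,\ldots,p-1\}$, hence $k-i \not\equiv 0 \pmod{p}$. Since $\omega_p$ is a \emph{primitive} $p$-th root of unity, this forces $\omega_p^{k-i}\neq 1$. Then I would set $z=\omega_p^{k-i}$ and apply the closed form for a finite geometric series,
\[
\sum_{l=0}^{p-1} z^l \;=\; \frac{z^p-1}{z-1}.
\]
Since $z^p = \omega_p^{p(k-i)} = (\omega_p^p)^{k-i} = 1$, the numerator vanishes while the denominator is nonzero, so the sum equals $0 = p\delta_{ik}$.

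There is no real obstacle here; the only subtlety is making sure that primitivity of $\omega_p$ is actually used to guarantee $\omega_p^{k-i}\neq 1$ when $0 < |k-i| < p$. Combining the two cases completes the proof.
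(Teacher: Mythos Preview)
Your proof is correct. The paper takes a slightly different route: rather than invoking the closed-form geometric series, it observes that the sum $S=\sum_l \omega_p^{(k-i)l}$ is invariant under the shift $l\mapsto l+1$ (since the summation is over a full period), which gives $\omega_p^{k-i}S=S$ and hence $(\omega_p^{k-i}-1)S=0$; primitivity then forces $S=0$ when $k\neq i$. Your approach is the more standard textbook computation and is arguably cleaner, since it yields the value $0$ directly from $z^p=1$ without an intermediate invariance step. The paper's shift argument, on the other hand, generalizes more readily to settings where one has a group action but no explicit closed form. Both hinge on exactly the same point you flagged: primitivity is what guarantees $\omega_p^{k-i}\neq 1$ for $0<|k-i|<p$.
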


\begin{proof}
Fix $i,k\in\{0,1,\ldots,p-1\}$. Any primitive $p$-th root of unity $\omega_{p}$ satisfies

\begin{align*}\omega_{p}^{k}\overline{\omega}_{p}^i\frac{1}{p}\sum_l \omega_{p}^{lk}\overline{\omega}_{p}^{il} = \frac{1}{p}\sum_l \omega_{p}^{(l+1)k}\overline{\omega}_{p}^{(l+1)i}=\frac{1}{p}\sum_l \omega_{p}^{lk}\overline{\omega}_{p}^{il},\end{align*} therefore $(\omega_{p}^{(k-i)}-1)\dfrac{1}{p}\displaystyle\sum_l\omega_{p}^{(k-i)l}=0 $. Since $\omega_{p}^{(k-i)}-1 =1-\delta_{ik}$ the conclusion follows.
\end{proof}

\begin{lemma}\label{diagonal-circ} Let $Z_p =\textrm{diag}(1, \overline{\omega}_{p}, \overline{\omega}_{p}^{2}, \cdots, \overline{\omega}_{p}^{p-1})$, then 
\begin{itemize}
\item[(i)] $F_{p}J_{p}F_{p}^{*}=Z_p,$
\item[(ii)] $(F_{p}\otimes F_{q})(J_{p}\otimes J_{q})(F_{p}\otimes F_{q})^{*}=Z_{p}\otimes Z_{q}.$
\end{itemize}
\end{lemma}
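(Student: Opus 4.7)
The plan is to prove (i) by exhibiting explicit eigenvectors of $J_p$, after which (ii) follows from a one-line tensor-product identity. The argument is essentially a direct computation; the proposition preceding the lemma is used only to guarantee that $F_p$ is unitary, so that the conjugation makes sense.

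For part (i), I first recall from Section \ref{cycles} that $J_p$ acts as the left shift $J_p e_k = e_{k-1}$ on the canonical basis (indices read mod $p$). I then compute $J_p F_p^{*} e_j$ directly. Using
\[
F_p^{*} e_j \;=\; \frac{1}{\sqrt{p}} \sum_{k=0}^{p-1} \overline{\omega}_p^{\,kj}\, e_k ,
\]
applying $J_p$ termwise, and changing the summation index $m = k-1$ (a permutation of $\{0,\dots,p-1\}$ modulo $p$), I obtain
\[
J_p F_p^{*} e_j \;=\; \frac{1}{\sqrt{p}} \sum_{m} \overline{\omega}_p^{\,(m+1)j}\, e_m \;=\; \overline{\omega}_p^{\,j}\, F_p^{*} e_j .
\]
Thus each column of $F_p^{*}$ is an eigenvector of $J_p$ with eigenvalue $\overline{\omega}_p^{\,j}$, and multiplying by $F_p$ on the left gives $F_p J_p F_p^{*} e_j = \overline{\omega}_p^{\,j}\, e_j = Z_p e_j$, which is (i). The orthogonality relation of the preceding proposition enters at the step $\langle F_p^{*} e_i, F_p^{*} e_j\rangle = \frac{1}{p}\sum_k \omega_p^{\,k(i-j)} = \delta_{ij}$, certifying that $F_p^{*}$ (hence $F_p$) is unitary.

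For part (ii), I apply the mixed-product rule $(A\otimes B)(C\otimes D) = AC\otimes BD$ together with $(F_p\otimes F_q)^{*} = F_p^{*}\otimes F_q^{*}$ to get
\[
(F_p\otimes F_q)(J_p\otimes J_q)(F_p\otimes F_q)^{*} \;=\; (F_p J_p F_p^{*})\otimes (F_q J_q F_q^{*}) \;=\; Z_p\otimes Z_q ,
\]
where the last equality uses (i) twice. No serious obstacle arises; the only point that requires care is the cyclic indexing in the change of variable $m=k-1$, which reorders the summation rather than altering its range.
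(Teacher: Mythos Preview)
Your proof is correct and amounts to the same direct computation as the paper's, organized as an eigenvector verification rather than a full matrix product in Dirac notation. The paper multiplies out $F_p J_p F_p^{*}$ and invokes the orthogonality relation $\sum_l \omega_p^{(k-i)l}=p\delta_{ik}$ to collapse the resulting triple sum, whereas you apply $J_p$ to a single column of $F_p^{*}$ and use unitarity of $F_p$ at the end; part (ii) is handled identically in both.
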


\begin{proof} Direct computations show that 
\[F_{p}J_{p}=\frac{1}{\sqrt{p}}\sum_{k,l}\omega_{p}^{kl}|e_{k}\rangle\langle e_{l+1}|,\]  
\begin{eqnarray*}
\begin{aligned}
F_{p}J_{p}F_{p}^{*} &= \frac{1}{p} \Big(\sum_{k,l}\omega_{p}^{kl}|e_{k}\rangle\langle e_{l+1}|\Big)\Big(\sum_{i,j}\overline{\omega}_{p}^{ij}|e_{j}\rangle\langle e_{i}|\Big)= \frac{1}{p}\sum_{i,k,l}\omega_{p}^{kl-(l+1)i} |e_{k}\rangle\langle e_{i}| \\  &= \frac{1}{p}\sum_{i,k}\Big(\sum_{l}\omega_{p}^{(k-i)l}\Big)\omega_{p}^{-i}|e_{k}\rangle\langle e_{i}| = \sum_{k}\overline{\omega}_{p}^{k}|e_{k}\rangle\langle e_{k}| =Z_p.
\end{aligned}
\end{eqnarray*}This proves $(i)$. Item $(ii)$ follows directly from $(i)$.
\end{proof}

Since each circulant matrix can be expressed in terms of the primary permutation matrix $J_p$, it follows that the discrete Fourier transform diagonalizes every circulant matrix as well as block circulant matrices with circulant blocks.

\begin{theorem}\label{diag-block-circ} If $A=\sum_{i}\alpha(i)J_{p}^{i}$ and $B=\sum_{i,j} \alpha(i,j) J_{p}^{i}\otimes J_{q}^{j},$ then 
\begin{itemize}
\item[(i)] \[F_{p}AF_{p}^{*}=\sum_{k}\lambda_{k} |e_{k}\rangle\langle e_{k}|,\] with $\lambda_{k}=\sum_{i}\alpha(i)\overline{\omega}_{p}^{k i}$, and  
\item[(ii)]  \[(F_{p}\otimes F_{q})B(F_{p}\otimes F_{q})^{*}=\sum_{k,l} \lambda_{k,l} |e_{k}\otimes e_{l}\rangle\langle e_{k}\otimes e_{l}|,\] with $\lambda_{k l}=\sum_{i,j}\alpha(i,j)\overline{\omega}_{p}^{ik}\overline{\omega}_{q}^{jl}$.
\end{itemize}
\end{theorem}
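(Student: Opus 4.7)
The plan is to leverage Lemma \ref{diagonal-circ} together with the algebraic fact that conjugation by a unitary is both linear and multiplicative. Since $F_p$ is unitary, $F_p X Y F_p^* = (F_p X F_p^*)(F_p Y F_p^*)$ for any operators $X,Y$, and hence $F_p J_p^i F_p^* = (F_p J_p F_p^*)^i = Z_p^i$ for every integer $i\geq 0$. Because $Z_p$ is already diagonal in the canonical basis, a direct computation gives
\begin{equation*}
Z_p^i = \sum_{k=0}^{p-1} \overline{\omega}_p^{k i}\, |e_k\rangle\langle e_k|.
\end{equation*}

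For part (i), I would then use linearity of conjugation by $F_p$ to write
\begin{equation*}
F_p A F_p^* = \sum_i \alpha(i)\, F_p J_p^i F_p^* = \sum_i \alpha(i)\, Z_p^i = \sum_k \Bigl(\sum_i \alpha(i)\, \overline{\omega}_p^{k i}\Bigr) |e_k\rangle\langle e_k|,
\end{equation*}
after interchanging the two finite sums. Setting $\lambda_k = \sum_i \alpha(i)\,\overline{\omega}_p^{ki}$ yields the claim.

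For part (ii), the same idea works but applied to the tensor product. The key algebraic observation is that $(F_p\otimes F_q)(J_p^i\otimes J_q^j)(F_p\otimes F_q)^* = (F_p J_p^i F_p^*)\otimes(F_q J_q^j F_q^*) = Z_p^i\otimes Z_q^j$. Using the two diagonal expansions above for $Z_p^i$ and $Z_q^j$, the tensor product is again diagonal:
\begin{equation*}
Z_p^i\otimes Z_q^j = \sum_{k,l} \overline{\omega}_p^{ki}\,\overline{\omega}_q^{lj}\, |e_k\otimes e_l\rangle\langle e_k\otimes e_l|.
\end{equation*}
Summing with coefficients $\alpha(i,j)$ and interchanging the order of summation produces the formula for $\lambda_{kl}$.

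There is no real obstacle here; the only thing to be careful about is that Lemma \ref{diagonal-circ} was only stated for $J_p$ itself, not for its powers, so the step $F_p J_p^i F_p^* = Z_p^i$ must be flagged explicitly as following from the multiplicativity of $X\mapsto F_p X F_p^*$. Everything else is linearity and interchange of finite sums.
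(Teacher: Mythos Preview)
Your proposal is correct and follows essentially the same approach as the paper: both use Lemma \ref{diagonal-circ} to reduce $F_p J_p^i F_p^*$ to the diagonal matrix $Z_p^i$, then apply linearity and swap the two finite sums to read off the eigenvalues $\lambda_k$ (and analogously for the tensor-product case). Your version is slightly more explicit in justifying the step $F_p J_p^i F_p^* = Z_p^i$ via multiplicativity of unitary conjugation, which the paper uses implicitly.
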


\begin{proof}
Using the above Lemma \ref{diagonal-circ}, a direct computation shows that  
\[F_{p}AF_{p}^{*} = \sum_{i} \alpha(i)F_{p}J_{p}^{i}F_{p}^{*}= \sum_{i}\alpha(i)\sum_{k}\overline{\omega}_{p}^{ik}|e_{k}\rangle\langle e_{k}|=\sum_{k}\Big(\sum_{i}\alpha(i)\overline{\omega}_{p}^{ki}\Big) |e_{k}\rangle \langle e_{k}|.\] This proves $(i)$.

Now observe that 

\begin{eqnarray*}
\begin{aligned}
(F_{p}\otimes F_{q})B(F_{p}\otimes F_{q})^{*} & =\sum_{i,j}\alpha(i,j)(F_{p}J_{p}^{i} F_{p}^{*})\otimes(F_{q}J_{q}^{j} F_{q}^{*}) \\ & = \sum_{i,j}\alpha(i,j)\Big(\sum_{k}\overline{\omega}_{p}^{ik} |e_{k}\rangle\langle e_{k}\Big)\otimes\Big(\sum_{l}\overline{\omega}_{q}^{jl} |e_{l}\rangle \langle e_{l}|\Big) \\ & = \sum_{k,l}\Big(\sum_{i,j}\alpha(i,j)\overline{\omega}_{p}^{ik}\overline{\omega}_{q}^{jl}\Big)|e_{k}\otimes e_{l}\rangle\langle e_{k}\otimes e_{l}|.
\end{aligned}
\end{eqnarray*} This finishes the proof. 
\end{proof}

\begin{corollary}\label{exp-block-circ} With the notations in the above theorem we have 
\begin{itemize}
\item[(i)] \[e^{tA} = \frac{1}{p}\sum_{j,l}\Phi_{l-j}(t)|e_{j}\rangle\langle e_{l}|,\] with $\Phi_{m}(t)=\sum_{k}\omega_{p}^{m k} e^{t\lambda_{k}},$ and 
\item[(ii)] \[e^{t B}= \frac{1}{pq}\sum_{i,j,m,n}\Phi_{m-i, n-j}(t)|e_{i}\otimes e_{j}\rangle \langle e_{m}\otimes e_{n}|,\] with $\Phi_{i,j}(t)=\sum_{k,l}\omega_{p}^{ik}\omega_{q}^{jl} e^{t\lambda_{kl}}.$ 
\end{itemize}
\end{corollary}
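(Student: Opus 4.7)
The plan is to combine the diagonalization supplied by Theorem \ref{diag-block-circ} with the elementary fact that unitary conjugation commutes with the matrix exponential, and then simply multiply out the discrete Fourier transforms using their explicit kernels. No new ideas beyond what has already been assembled in the preceding section are required; the corollary is essentially a transparent rewriting of $e^{tA}$ (and $e^{tB}$) back in the canonical basis.

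For part (i), since $F_p$ is unitary, Theorem \ref{diag-block-circ}(i) gives $A=F_p^{*}D_{A}F_{p}$ with $D_{A}=\sum_{k}\lambda_{k}|e_{k}\rangle\langle e_{k}|$, and therefore $e^{tA}=F_p^{*}e^{tD_{A}}F_{p}$, where $e^{tD_{A}}=\sum_{k}e^{t\lambda_{k}}|e_{k}\rangle\langle e_{k}|$ is immediate. Expanding this triple product using the explicit form $F_{p}=\frac{1}{\sqrt{p}}\sum_{k,j}\omega_{p}^{kj}|e_{k}\rangle\langle e_{j}|$ (so that the coefficients of $F_p^{*}$ carry the conjugate exponents), the $(j,l)$-entry collapses to $\frac{1}{p}\sum_{k}\overline{\omega}_{p}^{kj}\omega_{p}^{kl}e^{t\lambda_{k}}=\frac{1}{p}\sum_{k}\omega_{p}^{k(l-j)}e^{t\lambda_{k}}=\frac{1}{p}\Phi_{l-j}(t)$, proving (i).

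For (ii), the argument is structurally identical. Theorem \ref{diag-block-circ}(ii) gives $B=(F_{p}\otimes F_{q})^{*}D_{B}(F_{p}\otimes F_{q})$ with $D_{B}=\sum_{k,l}\lambda_{kl}|e_{k}\otimes e_{l}\rangle\langle e_{k}\otimes e_{l}|$, so $e^{tB}=(F_{p}\otimes F_{q})^{*}\bigl(\sum_{k,l}e^{t\lambda_{kl}}|e_{k}\otimes e_{l}\rangle\langle e_{k}\otimes e_{l}|\bigr)(F_{p}\otimes F_{q})$. Multiplying out with the tensor product Fourier kernels, the coefficient of $|e_{i}\otimes e_{j}\rangle\langle e_{m}\otimes e_{n}|$ becomes $\frac{1}{pq}\sum_{k,l}\omega_{p}^{k(m-i)}\omega_{q}^{l(n-j)}e^{t\lambda_{kl}}=\frac{1}{pq}\Phi_{m-i,n-j}(t)$, as claimed. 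The only real point requiring care is the sign convention in the exponents of $\omega_{p}$ and $\omega_{q}$, which must be handled consistently so that the argument of $\Phi$ appears as $l-j$ (respectively $m-i,n-j$) rather than its negative; this bookkeeping is forced by the conventions already fixed in the definition of $F_{p}$. The orthogonality relation of the preceding proposition is not invoked again, since the diagonalization step has already absorbed it.
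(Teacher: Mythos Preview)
Your proof is correct and follows essentially the same route as the paper: diagonalize via Theorem \ref{diag-block-circ}, exponentiate in the eigenbasis, and multiply out $F_p^{*}\,\mathrm{diag}(e^{t\lambda_k})\,F_p$ (respectively the tensor version) using the explicit Fourier kernels to read off the $(j,l)$-entry as $\frac{1}{p}\Phi_{l-j}(t)$. The paper's proof is just a terser version of exactly this computation.
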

\begin{proof} The result of the above theorem and a direct computation show that 
\begin{eqnarray*}
\begin{aligned}
e^{tA}= F_{p}^{*}\textrm{diag}(e^{t\lambda_{k}})F_{p}= \frac{1}{p}\sum_{j,l}\Big(\sum_{k}\omega_{p}^{(l-j)k}e^{t\lambda_{k}}\Big)|e_{j}\rangle \langle e_{l}| =\frac{1}{p}\sum_{j,l}\Phi_{l-j}(t)|e_{j}\rangle \langle e_{l}|.
\end{aligned}
\end{eqnarray*} This proves $(i)$.

In a similar way we see that 
\begin{eqnarray*}
\begin{aligned}
e^{tB} & = (F_{p}\otimes F_{q})^{*}\textrm{diag}(e^{t\lambda_{kl}})(F_{p}\otimes F_{q})\\ &
= \frac{1}{pq}\sum_{i,j,k,l,m,n} \overline{\omega}_{p}^{ik}\overline{\omega}_{q}^{jl}e^{t\lambda_{kl}}\omega_{p}^{km}\omega_{q}^{ln}|e_{i}\otimes e_{j}\rangle \langle e_{m}\otimes e_{n}| \\ & = \frac{1}{pq} \sum_{i,j,m,n}\Phi_{m-i, n-j}(t) |e_{i}\otimes e_{j}\rangle \langle e_{m}\otimes e_{n}|.
\end{aligned}
\end{eqnarray*}
\end{proof}


\section{Circulant quantum Markov semigroups}\label{circ-qms}

\subsection{Circulant completely positive maps}\label{circ-channels}
We first recall that a $p\times p$ complex matrix $A$ is called \textbf{reducible} if there exists a permutation matrix $P$ such that  
\[PAP^{*} = \left(\begin{array}{cc}B&C \\
                                      0 & D 
                                      \end{array}\right),\] 
where $B$ and $D$ are square matrices of order at least $1$. A matrix is called \textbf{irreducible} if it is not reducible. It is well known, see for instance Theorem 5.18 in Zhang's book\cite{fuzhen}, that every irreducible $p\times p$ permutation matrix $A$ is permutation similar to the primary permutation matrix $J_{p}$, i.e., there exists a permutation matrix $P$ such that $A=PJ_{p}P^{-1}$. 

\begin{lemma}\label{lema1} For every irreducible permutation matrix $J\in\mathcal M_p (\mathbb C)$ there exists a unique cycle $c$ of maximal length such that $J=J_c$ is the passage matrix of $c$.
\end{lemma}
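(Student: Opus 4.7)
The plan is to leverage the cited fact (Zhang, Thm.~5.18) that every irreducible permutation matrix $J\in\mathcal{M}_p(\mathbb{C})$ is permutation similar to the primary permutation matrix $J_p$, i.e., $J=PJ_pP^{-1}$ for some permutation matrix $P$. I will use this to transport the primary cycle $c_0=(0,1,\ldots,p-1)$, whose passage matrix is $J_p$, into a maximal-length cycle $c$ satisfying $J_c=J$.

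For existence, I would write $Pe_i=e_{\sigma(i)}$ for the underlying permutation $\sigma$ of $\{0,\ldots,p-1\}$, and use the explicit formula $J_p=\sum_{i=0}^{p-1}|e_i\rangle\langle e_{i+1}|$ to compute
\[
J \;=\; P\Big(\sum_{i=0}^{p-1}|e_i\rangle\langle e_{i+1}|\Big)P^{-1} \;=\; \sum_{i=0}^{p-1}|e_{\sigma(i)}\rangle\langle e_{\sigma(i+1)}|.
\]
Setting $c(i):=\sigma(i)$ for $0\leq i\leq p-1$, the entries $c(0),\ldots,c(p-1)$ are pairwise distinct because $\sigma$ is a bijection of $\{0,\ldots,p-1\}$, so $c$ is a bona fide cycle of maximal length $p$, and the last display matches the definition $J_c=\sum_{i=0}^{p-1}|e_{c(i)}\rangle\langle e_{c(i+1)}|$ exactly, yielding $J=J_c$.

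For uniqueness, I would assume $J=J_c=J_{c'}$ for two cycles $c,c'$ of length $p$. Equality of the passage matrices forces equality of the directed-edge sets $\{(c(i),c(i+1)):0\leq i\leq p-1\}=\{(c'(i),c'(i+1)):0\leq i\leq p-1\}$. Since every vertex in a length-$p$ cycle has a unique outgoing edge, iteratively following the outgoing-edge map from any chosen starting vertex produces the same ordered sequence of $p$ distinct vertices for both $c$ and $c'$ up to where one starts; thus $c'$ is a cyclic permutation of $c$, and by the invariance of a cycle under cyclic permutation of its vertices (Section~\ref{cycles}) the two represent the same cycle.

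The main potential snag is keeping the indexing conventions straight, since the definition uses both $J_c=\sum_i|e_{c(i)}\rangle\langle e_{c(i+1)}|$ and the action $J_ce_{c(i)}=e_{c(i-1)}$; I will double-check that $c(i)=\sigma(i)$ (rather than $\sigma^{-1}(i)$) is the correct choice to match the bra/ket order above. Beyond that routine verification, the lemma reduces cleanly to the cited conjugation theorem combined with a short graph-theoretic observation about length-$p$ cycle graphs, so no deeper obstacle is anticipated.
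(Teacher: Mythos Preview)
Your proposal is correct and follows essentially the same approach as the paper: both invoke the permutation-similarity $J=PJ_pP^{-1}$ and define $c$ by transporting the primary cycle through $P$, the only cosmetic difference being that the paper works via the action $Je_{c(i)}=e_{c(i-1)}$ (setting $e_{c(i)}=Pe_{i-1}$) while you expand the matrix sum directly. Your explicit uniqueness argument is a welcome addition, since the paper's proof merely asserts uniqueness without justifying it.
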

\begin{proof} Being an irreducible permutation matrix, $J$ is permutation similar to $J_{p}$, i.e., there exists a permutation matrix $P$ such that $J=PJ_{p}P^{-1}$. Therefore for any element of the canonical basis $\{e_i\}$ of $\mathbb C^n$ we have 
\begin{align*} JP e_i= PJ_{p} e_i =Pe_{i-1}.\end{align*} Define a unique cycle $c$ by means of the permutation $P$ taking $e_{c(i)}=P e_{i-1}$. Clearly $J=J_c$ since $J e_{c(i)} =e_{c(i-1)}$. 
\end{proof}

\begin{lemma} \label{lemma-subspaces} Let $B_0,B_1,\ldots B_{p-1}$ be $p$-dimensional subspaces mutually orthogonal with respect to the Hilbert-Schmidt inner product in $\mathcal M_p (\mathbb C)$, with $B_0$ the subspace of all diagonal matrices. If $J_c$ is the passage matrix of any cycle $c$ of maximal length, then 
\begin{eqnarray}\label{subspaces}
\begin{aligned}
& J_c B_l =B_{l+1}, \, \forall \; 0\leq l\leq p-1 \Longleftrightarrow \\  & B_l=span\{|e_{c(k)}\rangle \langle e_{c(k+l)}| k=0,1,\ldots,n-1\} \; \forall \;  0\leq l\leq p-1, 
\end{aligned}
\end{eqnarray} where the sums in the indices $k, l$ is modulus $p$.
 \end{lemma}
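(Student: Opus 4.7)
The plan is to prove both implications by a direct calculation based on the single identity $J_c\,e_{c(i)} = e_{c(i-1)}$ that was already established when the passage matrix was introduced. Multiplication on the left by $J_c$ on a rank-one operator acts only on the ket side, so
\[
J_c\bigl(|e_{c(k)}\rangle\langle e_{c(k+l)}|\bigr) \;=\; |J_c e_{c(k)}\rangle\langle e_{c(k+l)}| \;=\; |e_{c(k-1)}\rangle\langle e_{c(k+l)}|,
\]
which, after reindexing $k'=k-1$, equals $|e_{c(k')}\rangle\langle e_{c(k'+l+1)}|$. This single computation is the engine of both directions.

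For the implication $(\Leftarrow)$, I would assume the explicit form of the $B_l$. Applying the displayed computation to every spanning vector of $B_l$ shows that $J_cB_l$ is spanned by the family $\{|e_{c(k)}\rangle\langle e_{c(k+l+1)}|:k=0,\dots,p-1\}$, which is exactly the stated basis of $B_{l+1}$ (indices mod $p$). Since the spanning families are in bijection, $J_cB_l = B_{l+1}$ for every $l$. No orthogonality check is needed here, because the rank-one matrices $|e_i\rangle\langle e_j|$ are already Hilbert--Schmidt orthonormal, so the explicit form automatically guarantees the mutual orthogonality and the dimension count assumed in the hypothesis.

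For the implication $(\Rightarrow)$, the idea is to propagate the known description of $B_0$ through the relation $B_{l+1}=J_cB_l$, which iterated gives $B_l=J_c^lB_0$. Because $c$ is a cycle of maximal length its vertices $c(0),c(1),\dots,c(p-1)$ are a permutation of $\{0,1,\dots,p-1\}$, so $B_0$ (the diagonals) admits the basis $\{|e_{c(k)}\rangle\langle e_{c(k)}|:k=0,\dots,p-1\}$. Applying $J_c^l$ to each element of this basis using the action computed above (iterated $l$ times) yields $|e_{c(k-l)}\rangle\langle e_{c(k)}|$, and a reindex $k\mapsto k+l$ rewrites this family as $\{|e_{c(k)}\rangle\langle e_{c(k+l)}|\}$, which is the asserted basis of $B_l$.

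The only delicate point is bookkeeping: the cyclic shift goes one way on the ket (from $e_{c(k)}$ to $e_{c(k-1)}$) while the reindexing shifts both indices forward, so one must be consistent that all sums of indices live in $\mathbb{Z}/p\mathbb{Z}$ and that the primary cycle convention $J_ce_{c(i)}=e_{c(i-1)}$ is used throughout. Everything else is a routine linear-algebraic repackaging of that one shift identity.
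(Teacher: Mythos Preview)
Your proof is correct and rests on the same core computation as the paper's, namely $J_c\bigl(|e_{c(k)}\rangle\langle e_{c(k+l)}|\bigr)=|e_{c(k-1)}\rangle\langle e_{c(k+l)}|$. The paper frames the necessity direction as an induction on the dimension $p$, whereas you obtain it by iterating $B_l=J_c^{\,l}B_0$ directly; your formulation is cleaner and makes the role of the known base $B_0$ explicit, but the substance of the argument is the same.
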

 \begin{proof} Clearly condition on the left hand side of (\ref{subspaces}) is sufficient for $J_c B_l=B_{l+1}$ for all $0\leq l\leq p-1$. Let us proof the necessity  by induction on $p\geq 1$. For $p=1$ the condition on the left hand side clearly holds. Now, assuming that the condition holds for any $1\leq p$ and let us proof that it holds for $p+1$. We have $J_{c}|e_{c(i)}\rangle\langle e_{c(i+l)}|=|e_{c(i-1)}\rangle \langle e_{c(i+l)}|\in B_{l+1}, \, \forall \,  0\leq i\leq p,  0\leq l\leq p $ by assumption, where the sums in $c(i-1), c(i+l)$  is modulus $p+1$. Hence we have that $B_{l}=\textrm{span}\{|e_{c(i)}\rangle\langle e_{c(i+1)}| : \, 0\leq i\leq p\},  \, \forall \, 0\leq l\leq p$ since these $p+1$ vectors are linearly independent and $B_{l}$ is ($p+1$)-dimensional. \; 
 \end{proof}

\begin{definition} A linear operator $\Phi:\mathcal M_p (\mathbb C)\to\mathcal M_p (\mathbb C)$ is called circulant map (or circulant quantum channel) if there exist $p$-dimensional subspaces $B_l$, $l=0\ldots,p-1$, mutually orthogonal with respect to the Hilbert-Schmidt inner product with $B_0=span\{|e_k\rangle \langle e_k|: k=0,\ldots,p-1\}$, invariant under the action of $\Phi$, such that 
\begin{itemize}
\item[(i)] $\mathcal M_p (\mathbb C)= \bigoplus_{l=0}^{p-1} B_l$.

\item[(ii)] there exists an irreducible permutation matrix $J\in \mathcal M_p(\mathbb C)$ such that $J B_l =B_{l+1}$ sum modulus $p$.

\item[(iii)] If $c$ is the cycle associate with $J$ by Lemma \ref{lema1}, then under the isomorphism from  $B_l$ into  $\mathbb C^n$ defined by $|e_{c(k)}\rangle \langle e_{c(k+l)}|\mapsto e_{c(k)} $ we have that 
\begin{align*} \Phi(|e_{c(k)}\rangle \langle e_{c(k+l)}|)\mapsto e_{c(k)}Q 
\end{align*}  where $Q$ is a circulant $p\times p$ matrix.
\end{itemize}

\end{definition}

\textbf{Example.} Let $c$ be any cycle of maximal length in $\{0,1,\cdots,p-1\}$, then the CP linear map defined by \[\Phi_c (x)=\sum_{k=0}^{p-1} \gamma(p-k) J^{k}_c x J^{*k}_c,\] for some $\gamma(j)\geq 0$ is a circulant CP map. 
Let us define the subspaces $B_l=\textrm{span}\{|e_{c(k)}\rangle \langle e_{c(k+l)}|: k=0,1,\ldots,p-1\}$ for $l=0,\ldots,p-1$, clearly condition $(i)$ in the above definition holds. With $J=J_c$ in the above definition, let us prove the invariance of the subspaces $B_l$'s. For any $x^{(l)}=\sum_{j=0}^{p-1} x_{j,j+l} |e_{c(j)} \rangle \langle e_{c(j+l)}|\in B_l$ we have that  
\begin{eqnarray*}
\begin{aligned} 
\Phi_c(x^{(l)})&=\displaystyle \sum_{k=0}^{p-1}\sum_{j=0}^{p-1}\gamma(p-k) x_{j,j+l}|e_{c(j-k)}\rangle \langle e_{c(j+l-k)}| \\ &=\sum_{j=0}^{p-1}\left( \sum_{k=0}^{p-1}\gamma(p-k) x_{j+k,j+k+l}\right)|e_{c(j)}\rangle \langle e_{c(j+l)}|\in B_l.
\end{aligned}
\end{eqnarray*} Moreover, using the isomorphism induced by the cycle $c$ we get 
\begin{eqnarray*}
\begin{aligned}& \Phi_c(|e_{c(j)}\rangle\langle e_{c(j+l)}|)= \sum_{k=0}^{p-1}\gamma(p-k)|e_{c(j-k)}\rangle\langle e_{c(j-k+l)}|  \longmapsto \\ & \sum_{k=0}^{p-1}\gamma(p-k)e_{c(j-k)} = e_{c(j)}\left(\begin{array}{ccccc}  \gamma(0)   &\gamma(1) & \gamma(2) &\cdots &\gamma(p-1)\\ \gamma(p-1) &  \gamma(0)     & \gamma(1) &\cdots & \gamma(p-2) \\ \gamma(p-2) & \gamma(p-1)& \gamma(0) & \cdots &\gamma(p-3)\\ \vdots &   \vdots& \vdots  & \ddots   & \vdots     \\ \gamma(1)    & \gamma(2) & \gamma(3) & \cdots & \gamma(0) \end{array}\right).
\end{aligned}
\end{eqnarray*}

\begin{theorem} For every circulant CP map $\Phi$ on $\mathcal M_p(\mathbb C)$ there exists a cycle $c$ of length $p$  such that $\Phi=\Phi_c$. 
\end{theorem}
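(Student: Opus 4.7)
The plan is to read off, directly from the three structural conditions in the definition of a circulant CP map, a maximal-length cycle $c$ and non-negative weights $\gamma(0),\ldots,\gamma(p-1)$ for which $\Phi=\Phi_c$. I would first invoke condition (ii) to extract the irreducible permutation $J$ and apply Lemma \ref{lema1}, which provides the unique cycle $c$ of length $p$ with $J=J_c$; this is the cycle that will appear in $\Phi_c$. Since $J_cB_l=B_{l+1}$ and $B_0$ is the diagonal subspace, Lemma \ref{lemma-subspaces} then identifies each block explicitly as $B_l=\textrm{span}\{|e_{c(k)}\rangle\langle e_{c(k+l)}| : 0 \leq k \leq p-1\}$.

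Next I would unpack condition (iii). Writing the circulant matrix from (iii) as $Q=\textrm{circ}(q(0),q(1),\ldots,q(p-1))$, so that $Q_{k,m}=q(m-k)$, the condition that $\Phi(|e_{c(k)}\rangle\langle e_{c(k+l)}|)$ corresponds to the $k$-th row of $Q$, with the same $Q$ across every $l$, translates into the explicit formula
\[
\Phi(|e_{c(k)}\rangle\langle e_{c(k+l)}|) = \sum_{m=0}^{p-1} q(m-k)\, |e_{c(m)}\rangle\langle e_{c(m+l)}|.
\]
The uniform dependence on the shift $m-k$, and the fact that the same data $q$ governs every block $B_l$, is precisely what the circulant hypothesis encodes, and it is what lets me recast the sum in the cycle-shift form of $\Phi_c$.

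The computational heart of the argument is the change of summation index $k':=k-m\pmod{p}$. Using $J_c e_{c(i)}=e_{c(i-1)}$ and hence $J_c^{k'}|e_{c(k)}\rangle\langle e_{c(k+l)}|J_c^{*k'}=|e_{c(k-k')}\rangle\langle e_{c(k+l-k')}|$, the right-hand side above becomes $\sum_{k'}q(p-k')\,J_c^{k'}|e_{c(k)}\rangle\langle e_{c(k+l)}|J_c^{*k'}$. Setting $\gamma(j):=q(j)$, this is exactly $\Phi_c(|e_{c(k)}\rangle\langle e_{c(k+l)}|)$; since the vectors $|e_{c(k)}\rangle\langle e_{c(k+l)}|$ with $0\leq k,l\leq p-1$ span $\bigoplus_{l}B_l=\mathcal M_p(\mathbb C)$, linearity extends this to $\Phi=\Phi_c$ on all of $\mathcal M_p(\mathbb C)$.

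It remains to verify $\gamma(j)\geq 0$, as required for $\Phi_c$ to have the form given in the example preceding the theorem. This follows from positivity of $\Phi$ (a consequence of complete positivity): applied to the rank-one positive operator $|e_{c(i)}\rangle\langle e_{c(i)}|\in B_0$, the formula produces the diagonal operator $\sum_{m}q(m-i)|e_{c(m)}\rangle\langle e_{c(m)}|$, whose positivity forces each $q(m-i)$, hence each $\gamma(j)$, to be non-negative. The main obstacle I anticipate is bookkeeping: keeping the cycle-indexed basis $\{e_{c(k)}\}$ consistent with the circulant indexing of $Q$, so that the substitution $k'=k-m$ and the identity $J_c=\sum_j|e_{c(j)}\rangle\langle e_{c(j+1)}|$ combine with the correct reductions modulo $p$.
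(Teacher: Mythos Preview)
Your proposal is correct and follows essentially the same route as the paper: extract the cycle $c$ from condition (ii) via Lemma~\ref{lema1}, use Lemma~\ref{lemma-subspaces} to pin down the $B_l$'s, read off the circulant action on basis elements from condition (iii), and reindex to match $\Phi_c$. The only substantive addition is your explicit verification that $\gamma(j)\geq 0$ from positivity of $\Phi$ on diagonal rank-one projections; the paper simply asserts the coefficients are positive without argument, so your version is in fact slightly more complete.
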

\begin{proof} Assume that $\Phi$ is a CP circulant map. By Lemma (\ref{lemma-subspaces}) and condition $(iii)$ we have that $\Phi(|e_{c(j)}\rangle \langle e_{c(j+l)}|)=\sum_{k=0}^{p-1} 	\beta(k) |e_{c(j+k)}\rangle \langle e_{c(j+k+l)}|$ with some $\beta(k)$'s independent of $l$ and positive. On the other side, if $\Phi_{c}(x)=\sum_{k}\beta(p-k)J_{c}^{k}xJ_{c}^{k*}$ we have that $\Phi_c(|e_{c(j)}\rangle\langle e_{c(j+l)}|)= \sum_{k=0}^{p-1}\beta(p-k)|e_{c(j-k)}\rangle\langle e_{c(j-k+l)}|= \sum_{k}\beta(k)|e_{c(j+k)}\rangle\langle e_{c(j+k+l)}|$, sums modulus $p$, therefore  $\Phi=\Phi_c$ on $B_l$ for every $0\leq l\leq p-1$. Hence by condition $(ii)$ we can conclude that $\Phi(x)=\Phi_{c}(x)$ for all  $x\in\mathcal M_p (\mathbb C)$ and this finishes the proof.
\end{proof}

Consider the CP map on  $\mathcal M_p (\mathbb C)\otimes \mathcal M_q (\mathbb C)$ defined by  
\begin{eqnarray}\label{circulant-cp-map} \Phi_*(x) = \sum_{i,j} \alpha(p-i,q-j) (J_{p}^{i}\otimes J_{q}^{j}) x (J_{p}^{i}\otimes J_{q}^{j})^{*},\end{eqnarray} with $\alpha(i,j)\geq 0$ for all $0\leq i\leq p-1$, $0\leq j\leq q-1$ and $J_{s}, \; s=p,q$, the left shift operator. 

Motivated by the above discussion, maps of the class (\ref{circulant-cp-map}) will be called block circulant CP maps. More generally, we call block circulant  CP map to any CP linear combination of tensor products of powers of passage matrices. Restriction of block circulant CP maps to invariant subspaces coincide with block circulant matrices with circulant blocks. 

\begin{theorem}\label{invariant-subspaces}
For every $(k,l)\in{\mathbb Z}_{p}\otimes{\mathbb Z}_{q}$ let $B_{kl}$ be the subspace of $\mathcal M_p (\mathbb C)\otimes \mathcal M_q (\mathbb C)$ defined by  

\[B_{kl}=\textrm{span}\{|e_{i}\rangle\langle e_{i+k}| \otimes |e_{j}\rangle\langle e_{j+l}| : 0\leq i\leq p-1, \, 0\leq j\leq q-1\}.\] Then, 
\begin{itemize}
\item[(i)] the $pq$-dimensional subspaces $B_{kl}$ are mutually orthogonal with the Hilbert-Schmidt product, invariant under the action of $\Phi_{*}$ given by (\ref{circulant-cp-map}), $\bigoplus_{kl}{B_{kl}}={\mathcal M}_{p}(\mathbb C)\otimes{\mathcal M}_{q}(\mathbb C)$ and moreover, 

\item[(ii)] the restriction of $\Phi_{*}$ to any subspace $B_{kl}$ reduces to the action the block circulant matrix $Q=\sum_{i,j}\alpha(i,j)(J_{p}^{i}\otimes J_{q}^{j})$, through the isomorphism from $B_{kl}$ onto ${\mathbb C}^{p}\otimes{\mathbb C}^{q}$ defined by $|e_{i}\otimes e_{j}\rangle\langle e_{i+k}\otimes e_{j+l}|\mapsto e_{i}\otimes e_{j}.$ More precisely, 
\begin{eqnarray*}
\begin{aligned}& \Phi_{*}(|e_{i_0}\rangle\langle e_{i_0 +k}|\otimes |e_{j_0}\rangle \langle e_{j_0 +l}|) \\ &= \sum_{i,j}\alpha\big(p-(i_0-i),q-(j_0-j)\big) |e_{i}\otimes e_{j}\rangle \langle e_{i+k}\otimes e_{j+l}|  \mapsto \\ & \sum_{i,j}\alpha\big(p-(i_0-i),q-(j_0-j)\big)e_{i}\otimes e_{j} = (e_{i_0}\otimes e_{j_0}) Q.
\end{aligned}
\end{eqnarray*} Where $Q$ is the block circulant matrix $Q=\textrm{circ}(Q_0 , Q_1 , \cdots, Q_ {p-1})$ with circulant blocks 
\begin{eqnarray}\label{blocks}
\begin{aligned}& Q_{i}=\left(\begin{array}{ccccc}  \alpha(i,0)   &\alpha(i,1) & \alpha(i,2) &\cdots &\alpha(i,q-1) \\ 
\alpha(i,q-1) &  \alpha(i,0)     & \alpha(i,1) &\cdots & \alpha(i,q-2) \\ 
\alpha(i,q-2) & \alpha(i,q-1)& \alpha(i,0) & \cdots &\alpha(i,q-3)\\ \vdots &   \vdots& \vdots  & \ddots   & \vdots  \\ 
\alpha(i,1)    & \alpha(i,2) & \alpha(i,3) & \cdots & \alpha(i,0) \end{array}\right), \; 0\leq i\leq p-1.
\end{aligned}
\end{eqnarray}
\end{itemize}
\end{theorem}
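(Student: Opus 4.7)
The plan is to prove (i) by elementary linear-algebra bookkeeping and then derive (ii) by a direct computation that reduces to a change of summation index.

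For (i), I would first observe that the generators $|e_i\rangle\langle e_{i+k}|\otimes|e_j\rangle\langle e_{j+l}|$ of $B_{kl}$, for $(i,j)\in\{0,\dots,p-1\}\times\{0,\dots,q-1\}$, are orthonormal with respect to the Hilbert-Schmidt inner product, using the identity $\langle |u\rangle\langle v|,|u'\rangle\langle v'|\rangle_{HS}=\langle u,u'\rangle\langle v',v\rangle$. This gives $\dim B_{kl}=pq$. Moreover, two generators belonging to different $B_{kl}$ and $B_{k'l'}$ are orthogonal because matching of ket parts forces $(i,j)=(i',j')$ while matching of bra parts then forces $(k,l)=(k',l')$. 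Counting dimensions, the $pq$ subspaces each of dimension $pq$ contribute a total of $(pq)^2=\dim({\mathcal M}_p({\mathbb C})\otimes{\mathcal M}_q({\mathbb C}))$, which proves the direct-sum decomposition. Invariance is checked on a generator: since $J_p=\sum_m|e_m\rangle\langle e_{m+1}|$ acts as the left shift, $J_p^i e_m=e_{m-i}$, so a short computation yields
\[
(J_p^i\otimes J_q^j)\bigl(|e_{i'}\rangle\langle e_{i'+k}|\otimes|e_{j'}\rangle\langle e_{j'+l}|\bigr)(J_p^i\otimes J_q^j)^*=|e_{i'-i}\rangle\langle e_{i'-i+k}|\otimes|e_{j'-j}\rangle\langle e_{j'-j+l}|,
\]
which still lies in $B_{kl}$ because the offset between bra and ket indices is preserved. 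Linearity then gives $\Phi_*(B_{kl})\subseteq B_{kl}$.

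For (ii), I would apply (\ref{circulant-cp-map}) to $|e_{i_0}\rangle\langle e_{i_0+k}|\otimes|e_{j_0}\rangle\langle e_{j_0+l}|$, reuse the computation above, and then perform the substitution $i\mapsto i_0-i'\pmod{p}$, $j\mapsto j_0-j'\pmod{q}$, whereby $\alpha(p-i,q-j)$ turns into $\alpha(p-(i_0-i'),q-(j_0-j'))$, yielding precisely the claimed expansion. Passing through the isomorphism $|e_{i'}\rangle\langle e_{i'+k}|\otimes|e_{j'}\rangle\langle e_{j'+l}|\mapsto e_{i'}\otimes e_{j'}$, the image becomes $\sum_{i',j'}\alpha(p-(i_0-i'),q-(j_0-j'))\,e_{i'}\otimes e_{j'}$. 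To identify this with $(e_{i_0}\otimes e_{j_0})Q$, I would simply read off the matrix element of $Q=\sum_{i,j}\alpha(i,j)(J_p^i\otimes J_q^j)$ at row $(i_0,j_0)$ and column $(i',j')$: since $J_p^i\otimes J_q^j$ has a $1$ at that position exactly when $i'=i_0+i$ and $j'=j_0+j$, this entry equals $\alpha(i'-i_0\bmod p,\ j'-j_0\bmod q)=\alpha(p-(i_0-i'),q-(j_0-j'))$, matching the block-circulant structure $Q=\mathrm{circ}(Q_0,\dots,Q_{p-1})$ with blocks given by (\ref{blocks}).

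The whole proof is routine; the only subtlety lies in keeping straight the two places where the index convention gets inverted, namely the prescription $\alpha(p-i,q-j)$ in (\ref{circulant-cp-map}) and the conjugation by $J_p^i\otimes J_q^j$ acting as a left shift on kets but as a right shift on bras. An off-by-one or wrong-sign substitution would replace $Q$ by its transpose or shift $Q_0,\dots,Q_{p-1}$ by a fixed offset. Once the indexing is tracked carefully the verification is straightforward and no deeper tool is required.
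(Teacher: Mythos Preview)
Your proposal is correct and follows essentially the same approach as the paper: a direct computation of $(J_p^i\otimes J_q^j)$ conjugation on a generator of $B_{kl}$, followed by identification with the row action of $Q$, together with an orthogonality check via the Hilbert--Schmidt inner product. If anything, your version is more explicit than the paper's, which omits the dimension count for the direct-sum decomposition and does not spell out the index substitution or the verification that the resulting row vector matches $(e_{i_0}\otimes e_{j_0})Q$.
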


\begin{proof} 
For every fixed $(i_0, j_0)\in{\mathbb Z}_{p}\times{\mathbb Z}_{q}$ we have that 
\begin{eqnarray*}
\begin{aligned} & \Phi_{*}(|e_{i_0}\rangle\langle e_{i_0 +k}|\otimes |e_{j_0}\rangle \langle e_{j_0 +l}|) \\ &= \sum_{i,j}\alpha(p-i,q-j)\Big(J_{p}^{i}\otimes J_{q}^{j}\Big)|e_{i_0}\otimes e_{j_0}\rangle \langle e_{i_0 +k}\otimes e_{j_0 + l}|\Big(J_{p}^{i}\otimes J_{q}^{j}\Big)^{*} \\ &= \sum_{i,j} \alpha(p-i,q-j) |e_{i_0 -i}\otimes e_{j_0 -j}\rangle \langle e_{i_0 - i+ k}\otimes e_{j_0 - j+l}|  \mapsto \\ & \sum_{i,j} \alpha(p-i,q-j) (e_{i_0 -i}\otimes e_{j_0 -j}) = (e_{i_0}\otimes e_{j_0}) Q.
\end{aligned}
\end{eqnarray*} This proves that every subspace $B_{kl}$ is invariant. They are mutually orthogonal, since 
\begin{eqnarray*}
\begin{aligned} 
tr \Big(|e_{i+k} \otimes e_{j+l}\rangle\langle e_{i}\otimes e_{j}| |e_{i}\otimes e_{j}\rangle\langle e_{i+k'}\otimes e_{j+l'}|\Big) = \delta_{kl, k'l'} .
\end{aligned}
\end{eqnarray*} This proves the Theorem.
\end{proof}

\subsection{Circulant quantum Markov semigroups} 

Consider the discrete time Markov chain on the abelian group ${\mathbb Z}_{p}\times {\mathbb Z}_{q}$ associated with a given probability distribution $\alpha:{\mathbb Z}_{p}\times {\mathbb Z}_{q}\mapsto [0,1]$ with $\sum_{i,j}\alpha(i,j)=1$. If we set $\alpha(0,0)=0$, then the corresponding \textit{bi-stochastic circulant} transition probabilities matrix \[\Pi =\sum_{i,j}\alpha(i,j) (J_{p}^{i}\otimes J_{q}^{j}),\] can be considered as the transition probability  matrix of the embedded Markov chain of the continuous time Markov chain with infinitesimal generator (or Q-matrix) $Q=\Pi-\unit$, where $\unit$ denotes the identity matrix in ${\mathcal M}_{p}(\mathbb C)\otimes {\mathcal M}_{q}(\mathbb C)\cong{\mathcal M}_{n}(\mathbb C)$. Clearly $Q$ is a block circulant matrix with circulant blocks, we shall consider the quantum extensions, in pre-dual representation, 
\begin{eqnarray}\label{cycle-rep-phi} 
\Phi_{*}(x)=\sum_{(i,j)\in{\mathbb Z}_{p}\times{\mathbb Z}_{q}}\alpha(p-i,q-j) (J_{p}^{i}\otimes J_{q}^{j})x(J_{p}^{i}\otimes J_{q}^{j})^{*}.
\end{eqnarray} and 

\begin{eqnarray}\label{cycle-rep-gen}
{\mathcal L}_{*}(x)=\Phi_{*}(x)-x.
\end{eqnarray} of $\Pi$ and $Q$, respectively, with $x\in{\mathcal M}_{p}(\mathbb C)\otimes {\mathcal M}_{q}(\mathbb C)$. Clearly $\Phi_{*}$ is a circulant CP map (\textit{embedded quantum Markov chain}) . We call ${\mathcal L}_{*}$ a \textit{circulant} GKSL generator and \textit{circulant} qms the semigroup generated by ${\mathcal L}_{*}$.

Instead of the matrices $J_{p}$ and $J_{q}$, we can choose any pair of passage matrices $J_{c_p},J_{c_q}$ of cycles of maximal length in ${\mathbb Z}_{p}$ and ${\mathbb Z}_{q}$ respectively, having $G=\mathbb Z_p \times \mathbb Z_q$. Even more, any finite number $n$ of maximal length cycles $c_k$ can be chosen in $\mathbb Z_{p_k}$ respectively, and follow the computation along the same lines with $G=\times_{k=0}^{n-1} \mathbb Z_{p_k}$. 
Moreover, if ${\mathbb Z}_{p}$ has a prime order, then every power $J_{c}^k$ of a passage matrix $J_c$ is the passage matrix of some cycle $c_{k}\neq c$ if $0\neq k\neq 1$ (mod p). 
   
Having this in mind and Kalpaziduo's cycle representation\cite{kalpaz} of an irreducible Markov chain with uniform stationary measure $\pi=\{\frac{1}{p}\}$ and circulant generator Q,  \begin{align*}\frac{1}{p}Q= \sum_{c\in\mathcal C_\infty}  w_{c} J_c - \frac{1}{p}I, \end{align*}  we can regard equations (\ref{cycle-rep-phi}) and (\ref{cycle-rep-gen}) as  a quantum cycle representation of the  \textit{circulant} GKSL generator ${\mathcal L}_{*}$ with cycle weights $(\alpha(i,j))_{(i,j)\in{\mathbb Z}_{p}\times{\mathbb Z}_{q}}$. This motivates the following.

\begin{definition}\label{def-cycle-rep} 
Given a bounded GKSL generator of the form (\ref{gksl-rep}) with a dis\-cre\-te spectrum Hamiltonian, we call \textit{cycle representation} of its embedded quantum Markov chain \[\Phi(x)=\sum_{k} L_{k}^{*} x L_{k},\] to a GKSL representation of $\Phi$ of the form \[\Phi(x)=\sum_{l}\alpha_{l} U_{l}^{*} xU_{l},\]  where for each $l$, $\alpha_{l} >0$ and $U_{l}$ is a  passage matrix.
\end{definition} Clearly, any cycle decomposition of the embedded chain $\Phi$ induces a cycle representation of ${\mathcal L}$. 

\begin{remark}
Tensor product like $J_{p}^{i}\otimes J_{q}^{j}$ are irreducible matrices. Hence by Lemma \ref{lema1}, they are passage matrices of a cycle. This shows that our definition includes representations of the form (\ref{cycle-rep-phi}) and its extensions involving any finite number of cycles or higher order tensor products.
\end{remark}

By Theorem \ref{invariant-subspaces} each subspace $B_{kl}$ is invariant under $\Phi_*$, ${\mathcal L}_{*}$ and, consequently, also under the action of the semigroups ${\mathcal T}_{*}=\big({\mathcal T}_{*t}\big)_{t\geq 0}$ generated by ${\mathcal L}_{*}$. The state $\rho=\frac{1}{pq}(\unit\otimes\unit)$ is clearly invariant for  ${\mathcal T}_{*}$ since \[{\mathcal L}_{*}(\rho)=\sum_{(i,j)\neq(0,0)}\alpha(p-i,q-j)\unit\otimes\unit - \unit\otimes\unit=0,\] because $\sum_{(i,j)\neq(0,0)}\alpha(p-i,q-j)=1$.The $\rho$-adjoint $\tilde{\mathcal L}$ has the GKSL representation \[\tilde{\mathcal L}(x)=\sum_{i,j}\tilde{L}_{ij}^{*}x\tilde{L}_{ij},\] with $\tilde{L}_{ij}=L_{ij}^{*}$ and $L_{ij}=\alpha(p-i,q-j)^{\frac{1}{2}}(J_{p}^{i}\otimes J_{q}^{j})$. Hence $\tilde{L}_{ij}=\alpha(i,j)^{\frac{1}{2}}(J^i_p\otimes J^j_q)$.

One can write in direct representation \[\tilde{\mathcal L}(x)=\sum_{(i,j)\neq(0,0)}\alpha(i,j)(J_{p}^{i}\otimes J_{q}^{j})^{*}x(J_{p}^{i}\otimes J_{q}^{j}) - x.\] Hence the difference between ${\mathcal L}$ and its $\rho$-adjoint (reverse) operator looks like 
\begin{eqnarray*}
\begin{aligned}
\tilde{\mathcal L}(x)-{\mathcal L}(x)&=\sum_{(i,j)\neq(0,0)}\big(\alpha(i, j)-\alpha(p-i,q-j)\big)(J_{p}^{i}\otimes J_{q}^{j})^{*}x(J_{p}^{i}\otimes J_{q}^{j}) \\ & = \sum_{(i,j)\neq(0,0)}\big(q(i,j)-1\big) {L}^{*}_{ij}x{L}_{ij},
\end{aligned}
\end{eqnarray*} with $q(i,j)=\alpha(i,j){\alpha(p-i,q-j)}^{-1}$ and the ${L}_{ij}$'s as above. Therefore, the semigroup ${\mathcal T}_{*}$ satisfies a weighted detailed balance condition in the sense of Accardi-Fagnola-Quezada\cite{a-f-q} with weights $q=\big(q_{ij}=\alpha(i,j){\alpha(p-i,q-j)}^{-1}\big)$, see equation (\ref{def-weighted-db}) above. Consequently, by Corollary 2 in Ref.\cite{a-f-q}, detailed balance holds if and only if 
\begin{eqnarray}\label{qdb-alphas}
\alpha(p-i,q-j)=\alpha(i,j), \; \forall \; (0,0)\neq(i,j)\in{\mathbb Z}_{p}\times{\mathbb Z}_{q}.
\end{eqnarray}

\section{Quantum Entropy Production Rate for circulant qms}\label{QEPR}
Let us compute the Quantum Entropy Production Rate (QEPR) for the circulant semigroup ${\mathcal T}_{*}$ in the previous section. For simplicity we consider first the invariant state $\rho=\frac{1}{pq} \unit$, other invariant states are studied in Section \ref{other-invariant-states}. We know that every subspace $B_{kl}$ of ${\mathcal M}_{p}\otimes{\mathcal M}_{q}$ is invariant under the action of the elements of ${\mathcal T}_{*}$. This implies that the states $\Omega_{t}$ and $\tilde{\Omega}_{t}$ are diagonal with respect to the canonical basis.  

\begin{lemma}\label{diag-forward-qms}
With the notations in Section \ref{circ-qms} and Subsection \ref{entropy-pro-qms} the following hold:
\begin{itemize}
\item[(i)] for every $(i,j), (i', j') \in{\mathbb Z}_{p}\times{\mathbb Z}_{q}$, using the isomorphism induced by lemma \ref{lema1}, we have
\begin{eqnarray}
\begin{aligned}
 &{\mathcal T}_{*t}(|e_{i}\otimes e_{j}\rangle \langle e_{i'}\otimes e_{j'}|)\mapsto (e_{i}\otimes e_{j})e^{tQ} = \frac{1}{pq} \sum_{m,n} \Phi_{m-i,n-j}(t)(e_{m}\otimes e_{n}) \\ & \mapsto \frac{1}{pq}\sum_{m,n}\Phi_{m,n}(t) |e_{m+i}\otimes e_{n+j}\rangle\langle e_{m+i'}\otimes e_{n+j'}|. 
 \end{aligned}
\end{eqnarray} where \[\Phi_{m,n}(t)= \sum_{k,l}\omega_{p}^{mk}\omega_{q}^{nl}e^{t\lambda_{kl}}, \; \; \lambda_{kl}=\sum_{i,j}\alpha(i,j)\overline{\omega}_{p}^{ik}\overline{\omega}_{q}^{jl}.\] We recall that in this case $\alpha(0,0)=-1$ and $\sum_{(i,j)\neq (0,0)}\alpha(i,j)=1$. Moreover, the functions $\Phi_{m,n}(t)$ are real-valued, since $Q$ and hence $e^{tQ}$ are real matrices. 
\item[(ii)] 
\begin{eqnarray}
\begin{aligned}
\Omega_{t}= \frac{1}{pq} \sum_{m,n}\Phi_{m,n}(t) |u_{mn}\rangle\langle u_{mn}|,
\end{aligned}
\end{eqnarray} where $u_{mn}= \displaystyle\sum_{ij} \rho_{ij}^{\frac{1}{2}}(e_{i}\otimes e_{j})\otimes(e_{m+i}\otimes e_{n+j})$.
\end{itemize}
\end{lemma}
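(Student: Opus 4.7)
The plan for part (i) is to reduce the action of $\mathcal{T}_{*t}$ on a single matrix unit to the action of $e^{tQ}$ on a single basis vector of $\mathbb{C}^{p}\otimes\mathbb{C}^{q}$, and then translate back. Given indices $(i,j)$ and $(i',j')$, I would set $k=i'-i \pmod p$ and $l=j'-j\pmod q$, so that $|e_i\otimes e_j\rangle\langle e_{i'}\otimes e_{j'}|$ lies in the invariant subspace $B_{kl}$ of Theorem \ref{invariant-subspaces}. Under the isomorphism $B_{kl}\cong\mathbb{C}^p\otimes\mathbb{C}^q$ of that theorem, this matrix unit corresponds to $e_i\otimes e_j$, and the restriction of $\mathcal{L}_*$ to $B_{kl}$ is realized (on the right) by the block circulant matrix $Q=\Pi-\unit$. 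Consequently $\mathcal{T}_{*t}=e^{t\mathcal{L}_*}$ restricts to right multiplication by $e^{tQ}$. Substituting the explicit formula for $e^{tQ}$ supplied by Corollary \ref{exp-block-circ}(ii) yields $(e_i\otimes e_j)e^{tQ}=\frac{1}{pq}\sum_{m,n}\Phi_{m-i,n-j}(t)(e_m\otimes e_n)$. Mapping back into $B_{kl}$ and relabelling the summation via $m\mapsto m+i$, $n\mapsto n+j$ turns this into the claimed expression with prefactor $\Phi_{m,n}(t)$ and outer basis elements $|e_{m+i}\otimes e_{n+j}\rangle\langle e_{m+i'}\otimes e_{n+j'}|$. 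The reality of $\Phi_{m,n}(t)$ follows from the fact that $Q$, and therefore $e^{tQ}$, has real entries.

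For part (ii) the plan is a direct computation. With $\rho=\frac{1}{pq}\unit$ I have $\rho_{ij}^{1/2}=1/\sqrt{pq}$ for every pair $(i,j)$, so
\begin{eqnarray*}
|\Omega_\rho\rangle\langle\Omega_\rho| = \sum_{ij,\,i'j'}\rho_{ij}^{1/2}\rho_{i'j'}^{1/2}\,\bigl|(e_i\otimes e_j)\otimes(e_i\otimes e_j)\bigr\rangle\bigl\langle (e_{i'}\otimes e_{j'})\otimes(e_{i'}\otimes e_{j'})\bigr|.
\end{eqnarray*}
Applying $\unit\otimes\mathcal{T}_{*t}$ acts trivially on the first tensor leg and via part (i) on the second leg. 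Substituting the formula from (i), every term contributes a factor $\frac{1}{pq}\sum_{m,n}\Phi_{m,n}(t)\,|e_{m+i}\otimes e_{n+j}\rangle\langle e_{m+i'}\otimes e_{n+j'}|$ on the second tensor slot. Interchanging the $(m,n)$-sum with the $(i,j),(i',j')$-sums and recognizing the resulting bilinear in $(i,j)$ and $(i',j')$ as the outer product $|u_{mn}\rangle\langle u_{mn}|$ of the vector $u_{mn}=\sum_{ij}\rho_{ij}^{1/2}(e_i\otimes e_j)\otimes(e_{m+i}\otimes e_{n+j})$ gives precisely the stated diagonal decomposition.

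The substantive step is part (i); the main bookkeeping concern is keeping the two shifts $(k,l)=(i'-i,j'-j)$ consistent with the relabelling $m\mapsto m+i$, $n\mapsto n+j$ in the sum produced by Corollary \ref{exp-block-circ}(ii), with all index arithmetic understood modulo $p$ and $q$. Once that index juggling is clean, part (ii) is essentially a regrouping of terms and requires no further input beyond part (i) and the definition of $\Omega_\rho$.
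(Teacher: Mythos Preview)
Your proposal is correct and follows essentially the same route as the paper: reduce the action of $\mathcal{T}_{*t}$ on a matrix unit to right multiplication by $e^{tQ}$ via the invariant-subspace isomorphism, invoke Corollary~\ref{exp-block-circ}(ii) for the explicit entries of $e^{tQ}$, and then carry out a direct computation for $\Omega_t$ using part~(i). The paper compresses part~(i) to a one-line citation of Theorem~\ref{diag-block-circ} and Corollary~\ref{exp-block-circ}; your version is just a more explicit unwinding of the same argument, additionally invoking Theorem~\ref{invariant-subspaces} to justify the reduction to $Q$.
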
  

\begin{proof}
Item $(i)$ is an immediate consequence of Theorem \ref{diag-block-circ} and Corollary \ref{exp-block-circ}.  Now a direct computation using $(i)$ shows that 
\begin{eqnarray}
\begin{aligned}
&\Omega_{t} = \\ &\frac{1}{pq} \sum_{m,n}\Phi_{m,n}(t)\Big| \sum_{i,j}\rho^\frac{1}{2}_{ij} e_{i}\otimes e_{j}\otimes e_{m+i}\otimes e_{n+j}\Big\rangle\Big\langle \sum_{r,s}\rho_{rs}^\frac{1}{2} e_{r}\otimes e_{s}\otimes e_{m+r}\otimes e_{n+s} \Big| \\ & 
=\frac{1}{2} \frac{1}{pq} \sum_{m,n}\Phi_{m,n}(t)|u_{mn}\rangle \langle u_{mn}|.
\end{aligned}
\end{eqnarray} This finishes the proof.
\end{proof}

The subspaces $B_{kl}$ are invariant also for the reverse semigroup $\tilde{\mathcal T}_{*t}$. Moreover, similar computations yield the following.

\begin{lemma}\label{diag-backward-qms} For the $\rho$-adjoint (reverse) semigroup we have: 
\begin{itemize}
\item[(i)]
\begin{eqnarray}
\begin{aligned}
& \tilde{\mathcal T}_{*t}(|e_{i}\otimes e_{j}\rangle \langle e_{i'}\otimes e_{j'}|)\mapsto (e_{i}\otimes e_{j})e^{tQ^{*}} = \frac{1}{pq} \sum_{m,n} \tilde{\Phi}_{m-i,n-j}(t)(e_{m}\otimes e_{n}) \\ & \mapsto \frac{1}{pq}\sum_{m,n}\tilde{\Phi}_{m,n}(t) |e_{m+i}\otimes e_{n+j}\rangle\langle e_{m+i'}\otimes e_{n+j'}|, 
\end{aligned}
\end{eqnarray} where $Q^{*}$ is the transpose of  $Q$ and $\tilde{\Phi}_{m,n}=\Phi_{p-m,q-n}$.
\item[(ii)] 
\begin{eqnarray}
\begin{aligned}
\tilde{\Omega}_{t}= \frac{1}{pq}\sum_{m,n}{\Phi}_{p-m,q-n}(t) |u_{mn}\rangle\langle u_{mn}|,
\end{aligned}
\end{eqnarray} 
\end{itemize}
\end{lemma}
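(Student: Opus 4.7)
The strategy is to mirror the proof of Lemma \ref{diag-forward-qms}, but applied to the reverse generator $\tilde{\mathcal L}$. The key observation is that, as noted just before the lemma, $\tilde{\mathcal L}$ admits the direct representation
\[
\tilde{\mathcal L}(x)=\sum_{(i,j)\neq(0,0)}\alpha(i,j)\,(J_{p}^{i}\otimes J_{q}^{j})^{*}\,x\,(J_{p}^{i}\otimes J_{q}^{j})-x,
\]
so that, on each invariant subspace $B_{kl}$, the isomorphism $|e_{i}\otimes e_{j}\rangle\langle e_{i+k}\otimes e_{j+l}|\mapsto e_{i}\otimes e_{j}$ carries $\tilde{\mathcal L}$ to the transpose $Q^{*}$ of the circulant $Q$ from Theorem \ref{invariant-subspaces}. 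Indeed, transposing turns the coefficient $\alpha(p-i,q-j)$ in front of $J_{p}^{i}\otimes J_{q}^{j}$ into $\alpha(i,j)$, which is exactly what the shift in $\tilde{\mathcal L}$ produces. Granting this, the first equality in (i) is immediate from functional calculus, and the remaining form is obtained by invoking Corollary \ref{exp-block-circ}(ii) applied to $Q^{*}$ rather than to $Q$.

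The central identity to verify is $\tilde{\Phi}_{m,n}(t)=\Phi_{p-m,q-n}(t)$. By Theorem \ref{diag-block-circ}(ii) applied to $Q^{*}=\sum_{i,j}\alpha(p-i,q-j)(J_{p}^{i}\otimes J_{q}^{j})^{*}$, the eigenvalues of $Q^{*}$ are
\[
\tilde{\lambda}_{kl}=\sum_{i,j}\alpha(p-i,q-j)\,\omega_{p}^{ik}\omega_{q}^{jl},
\]
and a change of indices $i\mapsto p-i$, $j\mapsto q-j$ shows $\tilde{\lambda}_{kl}=\lambda_{p-k,q-l}$. Plugging this into the definition of $\tilde{\Phi}_{m,n}(t)$ coming from Corollary \ref{exp-block-circ}(ii) for $Q^{*}$ and making the substitution $k\mapsto p-k$, $l\mapsto q-l$ turns the prefactor $\omega_{p}^{mk}\omega_{q}^{nl}$ into $\omega_{p}^{(p-m)k}\omega_{q}^{(q-n)l}$, giving precisely $\Phi_{p-m,q-n}(t)$. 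This is the main computational step; everything else is bookkeeping.

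For part (ii), the argument is exactly the one used in Lemma \ref{diag-forward-qms}(ii): write
\[
|\Omega_{\rho}\rangle\langle\Omega_{\rho}|=\sum_{i,j,r,s}\rho_{ij}^{1/2}\rho_{rs}^{1/2}\,|e_{i}\otimes e_{j}\rangle\langle e_{r}\otimes e_{s}|\otimes|e_{i}\otimes e_{j}\rangle\langle e_{r}\otimes e_{s}|,
\]
apply $\unit\otimes\tilde{\mathcal T}_{*t}$ to the second tensor factor using part (i), and recognize the resulting sum in the index shift variables $(m,n)$ as $\frac{1}{pq}\sum_{m,n}\tilde{\Phi}_{m,n}(t)|u_{mn}\rangle\langle u_{mn}|$, where $u_{mn}$ is as defined in Lemma \ref{diag-forward-qms}. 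Substituting $\tilde{\Phi}_{m,n}=\Phi_{p-m,q-n}$ from the first part yields the stated formula.

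The only real obstacle is the index-gymnastics verification of $\tilde{\Phi}_{m,n}=\Phi_{p-m,q-n}$; everything else consists of routine application of Theorem \ref{diag-block-circ} and Corollary \ref{exp-block-circ} in conjunction with the invariant subspace decomposition of Theorem \ref{invariant-subspaces}, parallel to what was already carried out for the forward semigroup.
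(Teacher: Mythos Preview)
Your proposal is correct and follows precisely the approach the paper intends: the paper does not write out a proof at all, stating only that ``similar computations yield the following'' in reference to Lemma \ref{diag-forward-qms}. You have faithfully carried out those similar computations, including the key verification $\tilde{\Phi}_{m,n}=\Phi_{p-m,q-n}$ via the eigenvalue relation $\tilde{\lambda}_{kl}=\lambda_{p-k,q-l}$, which is exactly the content the paper leaves implicit.
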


\begin{theorem}\label{entropy-pro-expression}
Let ${\mathcal L}_{*}$ be a circulant GKSL generator of the form (\ref{cycle-rep-gen}), then the Quantum Entropy Production Rate of the corresponding qms is given by 
\[e_p({\mathcal T_*}, \rho)=\frac{1}{2} \frac{1}{pq} \sum_{(m,n)\in{\mathbb Z}_{p}\times{\mathbb Z_{q}}} \big(\alpha(m,n)- \alpha(p-m,q-n)\big) \log\frac{ \alpha(m,n)}{\alpha(p-m,q-n)}.\]
\end{theorem}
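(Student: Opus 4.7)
The plan is to differentiate at $t=0$ the explicit diagonal expression of $S(\Omega_t,\tilde\Omega_t)$ furnished by Lemmas \ref{diag-forward-qms} and \ref{diag-backward-qms}. First I will verify that, for the maximally mixed invariant state $\rho=(pq)^{-1}\unit$ used here, the family $\{u_{mn}\}_{(m,n)\in\mathbb Z_p\times\mathbb Z_q}$ is orthonormal: $\rho_{ij}^{1/2}$ is the constant $(pq)^{-1/2}$, and a direct inner-product calculation gives $\langle u_{mn},u_{m'n'}\rangle=\delta_{mm'}\delta_{nn'}$. Hence both $\Omega_t$ and $\tilde\Omega_t$ are simultaneously diagonal in the same orthonormal system, with eigenvalues $\Phi_{m,n}(t)/pq$ and $\Phi_{p-m,q-n}(t)/pq$ respectively, and the quantum relative entropy collapses to its commutative Kullback--Leibler form
$$S(\Omega_t,\tilde\Omega_t)=\frac{1}{pq}\sum_{m,n}\Phi_{m,n}(t)\log\frac{\Phi_{m,n}(t)}{\Phi_{p-m,q-n}(t)}.$$

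Next I will expand $\Phi_{m,n}(t)$ to first order in $t$. The orthogonality relation for roots of unity (the proposition preceding Lemma \ref{diagonal-circ}) gives $\Phi_{m,n}(0)=pq\,\delta_{(m,n),(0,0)}$; differentiating termwise and substituting $\lambda_{kl}=\sum_{i,j}\alpha(i,j)\overline\omega_p^{ik}\overline\omega_q^{jl}$, a second use of the same orthogonality relation yields $\Phi'_{m,n}(0)=pq\,\alpha(m,n)$. Thus $\Phi_{m,n}(t)=pq\,\delta_{(m,n),(0,0)}+tpq\,\alpha(m,n)+O(t^2)$. The $(0,0)$ summand of $S$ is identically zero because the argument of the logarithm equals $1$. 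For each $(m,n)\neq(0,0)$ both $\Phi_{m,n}(t)$ and $\Phi_{p-m,q-n}(t)$ vanish linearly at $0$, but their quotient has a finite limit $\alpha(m,n)/\alpha(p-m,q-n)$, so the generic summand behaves like $t\,\alpha(m,n)\log(\alpha(m,n)/\alpha(p-m,q-n))+O(t^2\log t)$. Differentiation at $t=0$ kills the remainder, leaving a sum of the form $\sum_{(m,n)\neq(0,0)}\alpha(m,n)\log(\alpha(m,n)/\alpha(p-m,q-n))$ up to the global prefactor.

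To match the symmetric form of the statement, I will use the involution $(m,n)\mapsto(p-m,q-n)$ of $\mathbb Z_p\times\mathbb Z_q$. A reindexing shows that $\sum_{m,n}\alpha(m,n)\log(\alpha(m,n)/\alpha(p-m,q-n))=-\sum_{m,n}\alpha(p-m,q-n)\log(\alpha(m,n)/\alpha(p-m,q-n))$; averaging the two expressions produces $\tfrac12\sum_{m,n}(\alpha(m,n)-\alpha(p-m,q-n))\log(\alpha(m,n)/\alpha(p-m,q-n))$. Since the $(0,0)$ term of this symmetric sum is identically zero, the range of summation can be extended to all of $\mathbb Z_p\times\mathbb Z_q$, yielding the claimed formula. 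The principal technical obstacle will be the apparent $0/0$ singularity of the logarithm at $t=0$: for each $(m,n)\neq(0,0)$ both the numerator and the denominator vanish, while the coefficient $\Phi_{m,n}(t)/pq$ also tends to zero. The first-order Taylor expansion resolves this cleanly whenever all $\alpha(m,n)$ are strictly positive; weights where $\alpha(m,n)>0$ but $\alpha(p-m,q-n)=0$ will have to be flagged separately as yielding $e_p=+\infty$, consistent with the nullspace convention recalled in the definition of $S$.
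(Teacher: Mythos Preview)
Your proposal is correct and follows essentially the same route as the paper: both use the diagonal decompositions of $\Omega_t$ and $\tilde\Omega_t$ from Lemmas \ref{diag-forward-qms} and \ref{diag-backward-qms}, reduce the relative entropy to its classical form, and pass to the limit $t\to 0^+$ using that $\Phi_{m,n}(t)/t\to\alpha(m,n)$. The only cosmetic differences are that the paper symmetrizes before taking the limit (you do it after) and extracts $\alpha(m,n)$ as the matrix element $\langle e_0\otimes e_0, Q(e_m\otimes e_n)\rangle$ rather than via your direct Taylor-plus-orthogonality computation; your extra care with the orthonormality of the $u_{mn}$ and with the degenerate case $\alpha(p-m,q-n)=0$ is welcome but not in the paper.
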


\begin{proof}
From the above lemmata it follows that the  relative entropy has the explicit expression,   
\begin{eqnarray}
\begin{aligned}S(\Omega_t,\tilde{\Omega}_t)&= \frac{1}{pq} \sum_{m,n}\Phi_{m,n}(t) log\frac{\Phi_{m,n}(t) }{{\Phi}_{p-m,q-n}(t) }\\ &=\frac{1}{2} \frac{1}{pq} \sum_{m,n}\Big( \Phi_{m,n}(t)-{\Phi}_{p-m,q-n}(t)\Big) log\frac{\Phi_{m,n}(t)}{{\Phi}_{p-m,q-n}(t)}. 
\end{aligned}
\end{eqnarray}

For the Quantum Entropy Production Rate we have, 
\begin{eqnarray*}
\begin{aligned} 
e_p({\mathcal T_*}, \rho) &=\lim_{t\to 0^+} \frac{S(\Omega_t,\tilde{\Omega}_t)}{t} \\ &=\frac{1}{2}  \frac{1}{pq} \sum_{m,n}\left(\lim_{t\to 0^+}\frac{ \Phi_{m,n}(t)-{\Phi}_{p-m,q-n} (t)}{t}\right) \lim_{t\to 0^+}log\frac{\frac{\Phi_{m,n}(t)}{t}}{ \frac{{\Phi}_{p-m,q-n}(t)}{t}}.  \end{aligned}
\end{eqnarray*}

But a simple computation shows that for every $m,n$, 
\begin{eqnarray*}
\begin{aligned}
\lim_{t\to 0^+}\frac{\Phi_{m,n}(t)}{t}&=\lim_{t\to 0^+}\Big( \left\langle e_{0}\otimes e_{0},\frac{e^{tQ}-I}{t} e_{m}\otimes e_{n} \right\rangle + \frac{\langle e_{0}\otimes e_{0}, e_{m}\otimes e_{n} \rangle}{t}\Big)\\ &=\langle e_{0}\otimes e_{0},Q (e_{m}\otimes e_{n}) \rangle=\alpha(m,n). 	
\end{aligned}
\end{eqnarray*} Therefore 

\begin{align*}e_p({\mathcal T_*}, \rho)=\frac{1}{2} \frac{1}{pq}  \sum_{m,n}\Big( \alpha(m,n)-\alpha(p-m,q-n)\Big) log\frac{\alpha(m,n)}{\alpha(p-m,q-n)}.
\end{align*} This finishes the proof. 
\end{proof}

\section{Comparison to Classical Entropy Production Rate}\label{QEPR-EPR-compare}

The Quantum Entropy Production Rate (\ref{quan-ent}) aims at generalizing the classical one, hence it is natural to expect that some relation can be found between them. In this section we compute explicitly the (classical) Entropy Production Rate for the restriction of Circulant Quantum Markov Semigroups to the diagonal commutative sub-algebra, namely $B_{00}$, and show it actually coincides with its quantum counterpart. 

According to Qian et al.\cite{qian}, the Classical Entropy Production Rate of an irreducible Markov chain with intensity matrix $Q=(q_{ij})_{i,j\in S}$ and stationary measure $\pi=(\pi_i)_{i\in S}$, over a finite state space $S$ is given by 
\begin{eqnarray}\label{qians-formula}
\begin{aligned} e_p=\frac{1}{2}\displaystyle \sum_{i,j\in S} (\pi_i q_{ij}-\pi_{j} q_{ji}) \log \dfrac{\pi_i q_{ij}}{\pi_j q_{ji}}.
\end{aligned}
\end{eqnarray} 

By Theorem \ref{invariant-subspaces}, the restriction of $\mathcal L_*$ to $B_{00}$ reduces to the action of the block circulant matrix $Q=circ(Q_0,Q_1,\ldots,Q_{p-1})$, with circulant blocks of the form (\ref{blocks}) and $\alpha(0,0)=-1$. In terms of the distribution $\alpha$, each matrix element of $Q$ is given by $q_{ij}= \alpha\big((l-k)_p , (r_j -r_i)_q\big)$  where for every pair $0\leq i,j \leq pq-1$ we write $i=qk+r_i$, $ j=ql+r_j$, $0\leq k,l\leq p-1$, $0\leq r_i,r_j \leq q-1$, and for every $-(s-1)\leq x\leq s-1$, $s=p,q$, we define 

\begin{align*}\begin{array}{cc} (x)_s =\left\{\begin{array}{cc}x & \mbox{\text{ if $x\geq0$}} \\
                                             s+x &\mbox{\text{ if $x<0$}}.\end{array} \right.
                                             \end{array}  \end{align*}
Clearly the relation $(-x)_s=s-(x)_s$ holds true.  
 \begin{corollary}The Quantum Entropy Production Rate of a Circulant qms equals the Classical Entropy Production Rate of its diagonal-restricted Markov chain, i.e.,  
 \begin{align*}e_p({\mathcal T_*}, \rho)=e_p. \end{align*}
\end{corollary}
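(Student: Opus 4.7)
The corollary is a direct comparison between the expression for the Quantum Entropy Production Rate derived in Theorem \ref{entropy-pro-expression} and Qian's classical formula (\ref{qians-formula}), once the latter is rewritten using the explicit block-circulant structure of $Q$. The strategy is to substitute the explicit matrix entries of $Q$ and the stationary measure into the classical formula and then reindex the double sum over states $i,j\in S$ to match the single sum over $(m,n)\in\mathbb Z_p\times\mathbb Z_q$ that appears on the quantum side.

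First I would plug in $\pi_i=\frac{1}{pq}$ (the stationary measure of the bi-stochastic circulant chain, which is uniform on $S$ with $|S|=pq$), and use the identification $i=qk+r_i$, $j=ql+r_j$ together with $q_{ij}=\alpha\big((l-k)_p,(r_j-r_i)_q\big)$, to rewrite each summand of (\ref{qians-formula}) as
\[ (\pi_i q_{ij}-\pi_j q_{ji})\log\frac{\pi_i q_{ij}}{\pi_j q_{ji}}=\frac{1}{pq}\Big(\alpha(m,n)-\alpha\big((k-l)_p,(r_i-r_j)_q\big)\Big)\log\frac{\alpha(m,n)}{\alpha\big((k-l)_p,(r_i-r_j)_q\big)}, \]
where $m=(l-k)_p$ and $n=(r_j-r_i)_q$.

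Next I would apply the identity $(-x)_s=s-(x)_s$ noted just before the corollary (understood modulo $s$ when $x=0$) to recognize the reverse rate as $q_{ji}=\alpha(p-m,q-n)$, so that each summand becomes a function of $(m,n)$ alone. Then the essential step is to reindex the double sum over $i,j\in S$ as a sum over $(m,n)\in\mathbb Z_p\times\mathbb Z_q$: given any $(m,n)$, the constraints $l\equiv k+m\pmod{p}$ and $r_j\equiv r_i+n\pmod{q}$ determine $j$ uniquely from $i$, so exactly $pq$ pairs $(i,j)$ correspond to each $(m,n)$. Carrying this multiplicity through the factor $\pi_i=1/(pq)$ collapses Qian's double sum into the single sum over $\mathbb Z_p\times\mathbb Z_q$ appearing in Theorem \ref{entropy-pro-expression}, and the corollary then follows by direct comparison of the two expressions.

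The only subtle point I anticipate is the bookkeeping around the degenerate case $k=l$, $r_i=r_j$ (i.e., $i=j$) and the corresponding index $(m,n)=(0,0)$. On the classical side the convention $0\log(0/0)=0$ kills this term, and on the quantum side $\alpha(m,n)-\alpha(p-m,q-n)$ vanishes at $(0,0)$ in $\mathbb Z_p\times\mathbb Z_q$; so one must verify that these contributions cancel consistently on both sides, and that the relation $(k-l)_p\equiv p-m\pmod{p}$ is interpreted in $\mathbb Z_p$ (where $p-0\equiv 0$) rather than as an identity of integers. Once this is handled, the identification is a purely combinatorial reindexing with no analytical content.
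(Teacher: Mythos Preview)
Your proposal is correct and follows essentially the same route as the paper's own proof: substitute the uniform stationary measure $\pi_i=\frac{1}{pq}$ and the explicit entries $q_{ij}=\alpha\big((l-k)_p,(r_j-r_i)_q\big)$ into Qian's formula, then reindex via the change of variables $m=(l-k)_p$, $n=(r_j-r_i)_q$ and compare with Theorem~\ref{entropy-pro-expression}. Your version is in fact more explicit than the paper's, since you spell out why each $(m,n)$ is hit by exactly $pq$ pairs $(i,j)$ and you flag the $(m,n)=(0,0)$ bookkeeping, both of which the paper passes over silently.
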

 
 \begin{proof} An application of the above formula (\ref{qians-formula}), re-ordering the sum according with the order of  blocks and the change of variables $m=(l-k)_p$, $n=(r_j-r_i)_q$ yields,   
 
 \begin{eqnarray*}
 \begin{aligned}
e_p  = & \frac{1}{2} \frac{1}{pq} \displaystyle \sum_{k,l=0}^{p-1}\sum_{r_i,r_j=0}^{q-1} \Big(\alpha\big((l-k)_p , (r_j -r_i)_q\big)-\alpha\big((k-l)_p , (r_i -r_j)_q\big)\Big) \times \\ & \log \frac{\alpha\big((l-k)_p , (r_j -r_i)_q\big)}{\alpha\big((k-l)_p , (r_i -r_j)_q\big)} \; 
 \\&= \; \frac{1}{2} \frac{1}{pq}\displaystyle \sum_{(m,n)\in{\mathbb Z}_{p}\times{\mathbb Z_{q}}}\Big(\alpha(m,n)-\alpha(p-m,q-n)\Big)  \log\frac{\alpha(m,n)}{\alpha(p-m,q-n)}\\&=e_p({\mathcal T_*}, \rho).
\end{aligned}
\end{eqnarray*} This proves the corollary.
\end{proof}
 
\section{QEPR with respect to other invariant states}\label{other-invariant-states}

To close the paper, in this section we compute the QEPR in any invariant state of the semigroup ${\mathcal T}_{*}$. 

\begin{proposition} Every invariant state of $\mathcal L_*$ has the form \begin{eqnarray}\label{invariant-state}\rho=\frac{1}{pq}\unit_p \otimes \unit_q +\displaystyle \sum_{ij} \rho_{ij} J^i_p \otimes J^j_q , \end{eqnarray} where $\rho_{ij}$ are complex numbers constrained by the positiveness of $\rho$. \end{proposition}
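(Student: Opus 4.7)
The plan is to pin down the $\Phi_*$-fixed points inside $\mathcal{M}_p(\mathbb{C}) \otimes \mathcal{M}_q(\mathbb{C})$ by combining the invariant subspace decomposition of Theorem \ref{invariant-subspaces} with the Fourier diagonalization of Theorem \ref{diag-block-circ}, and then to enforce trace and positivity to recover the claimed family of states.

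First I would dispatch the sufficiency direction by direct computation. Since the operators $J_p^a \otimes J_q^b$ commute with every $J_p^k \otimes J_q^l$, inner conjugation acts trivially: $(J_p^a \otimes J_q^b)(J_p^k \otimes J_q^l)(J_p^a \otimes J_q^b)^* = J_p^k \otimes J_q^l$. Summing against the weights $\alpha(p-a, q-b)$, whose total is $1$, gives $\Phi_*(J_p^k \otimes J_q^l) = J_p^k \otimes J_q^l$. By linearity, every matrix of the form $\sum_{k,l} \rho_{kl}\, J_p^k \otimes J_q^l$ is $\Phi_*$-invariant, and since $\mathrm{tr}(J_p^k \otimes J_q^l) = pq\,\delta_{k,0}\delta_{l,0}$, the unit trace condition forces the $(0,0)$-coefficient to equal $1/(pq)$.

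For the converse I would write an invariant $\rho$ as $\rho = \sum_{k,l} \rho^{(kl)}$ with $\rho^{(kl)} \in B_{kl}$. By Theorem \ref{invariant-subspaces}(i) each summand is itself $\Phi_*$-invariant. Using the isomorphism $\iota_{kl}$ from Theorem \ref{invariant-subspaces}(ii), $\Phi_*|_{B_{kl}}$ is intertwined with right multiplication by the block circulant $Q = \sum \alpha(i,j)\, J_p^i \otimes J_q^j$, so invariance of $\rho^{(kl)}$ becomes the condition that $\iota_{kl}(\rho^{(kl)})$ is a left $1$-eigenvector of $Q$. Theorem \ref{diag-block-circ} diagonalizes $Q$ in the Fourier basis with eigenvalues $\lambda_{mn} = \sum_{i,j} \alpha(i,j)\,\overline{\omega}_p^{im}\overline{\omega}_q^{jn}$; here $\lambda_{00} = 1$ and, in the irreducible/aperiodic regime, $|\lambda_{mn}| < 1$ for $(m,n) \neq (0,0)$. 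Hence the $1$-eigenspace is spanned by the uniform vector $\sum_{i,j} e_i \otimes e_j$, whose preimage under $\iota_{kl}$ is $\sum_{i,j} |e_i \otimes e_j\rangle\langle e_{i+k} \otimes e_{j+l}| = J_p^k \otimes J_q^l$. So $\rho^{(kl)}$ must be a scalar multiple of $J_p^k \otimes J_q^l$, giving exactly the form (\ref{invariant-state}).

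The hard part will be the degenerate regime, where some $\lambda_{mn}$ with $(m,n) \neq (0,0)$ also equals $1$: then the $1$-eigenspace in each $B_{kl}$ is more than one-dimensional and additional invariants appear outside the abelian subalgebra generated by $\{J_p^k \otimes J_q^l\}$. The statement therefore implicitly relies on a mild non-degeneracy hypothesis on $\alpha$, which I would either state explicitly (no nontrivial character $(i,j)\mapsto \overline{\omega}_p^{im}\overline{\omega}_q^{jn}$ sums to $1$ against the weights) or absorb into the understanding that the proposition describes the typical situation. Once the form is established, positivity of $\rho$ trims the $pq$-dimensional affine family down to the physical invariant states and supplies the implicit constraints on the $\rho_{kl}$ mentioned in the statement.
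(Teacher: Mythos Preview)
Your approach matches the paper's almost exactly: decompose an invariant $\rho$ along the subspaces $B_{kl}$, invoke Theorem \ref{invariant-subspaces} to turn each component into a left eigenvector condition for the common block-circulant matrix, identify the solution as a scalar multiple of the all-ones vector (hence of $J_p^k\otimes J_q^l$ under the isomorphism), and then impose trace and positivity; your appeal to the Fourier eigenvalues of Theorem \ref{diag-block-circ} makes explicit the step the paper states without justification (``any solution to these systems is a multiple of the identity vector''). Your caveat about the degenerate regime is well taken---the paper silently assumes that the eigenvalue $1$ of $\Pi$ (equivalently, the eigenvalue $0$ of the $Q$-matrix) is simple, which is precisely the irreducibility/aperiodicity hypothesis you flag.
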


\begin{proof} We decompose $\rho$ into its mutually orthogonal components in the subspaces $B_{kl}$, namely $\rho=\sum_{kl}\hat{\rho}_{kl}$. Clearly $\mathcal L_*(\rho)=0$ if and only if ${\mathcal L}(\hat{\rho}_{kl})=0$ for every $(k, l)\in{\mathbb Z}_{p}\times{\mathbb Z}_{q}$. As a consequence of Theorem \ref{invariant-subspaces}, using the isomorphism defined there, each of the above conditions becomes a linear system of equations of the form $\hat{\rho}_{kl}Q=0$, where $Q$ is the same circulant matrix for all systems. Any solution to these systems is a multiple of  the identity vector, which yields the solution (\ref{invariant-state}). Although every choice of complex constants $\rho_{kl}$ give a solution of $\mathcal L_*(\rho)=0$, not all of them give back a state $\rho$. In fact, $\rho_{00}=\frac{1}{pq}$ so that $tr\rho=1$ while the remaining $\rho_{kl}'s$ are constrained by the positiveness of $\rho$. 
Conversely, if $\rho$ has the form (\ref{invariant-state}) then  $\mathcal L_*(\rho)=\rho\left( \displaystyle \sum_{ij\neq 0} \alpha(p-i,q-j)\unit_p \otimes \unit_q -  \unit_p \otimes \unit_q \right)=0.$
\end{proof}

By Lemma \ref{diagonal-circ}, any invariant state $\rho$ can be diagonalized by the discrete Fourier Transform, indeed, 
\[\rho=\sum_{lk} \tilde{\rho}_{kl} |\tilde{e_l} \otimes \tilde{e_k} \rangle \langle \tilde{e_l} \otimes \tilde{e_k} |,\] where $\tilde{\rho}_{lk}=\dfrac{1}{pq}+\displaystyle \sum_{ij} \rho_{ij} \overline{\omega}_{p}^{ik}\overline{\omega}_{q}^{jl}$ and $\tilde{e}_l=F_{p}^{*}e_{l}$, $\tilde{e}_k=F_{q}^{*}e_{k}.$

In the next computations it is understood that sums over the first coordinate of the tensor product go from 0 to $p-1$ and sums over the second coordinate go from $0$ to $q-1$. We use the results and notations in Lemma \ref{diag-forward-qms}.

Let us compute the state associated with ${\mathcal T}_{*}$ using the basis of $\rho$, $\{\tilde{e}_{i}\otimes \tilde{e}_{j}\}_{(i,j)\in{\mathbb Z}_{p}\times{\mathbb Z}_{q}}$,

\begin{eqnarray*}
\begin{aligned}  \Omega_t= & \displaystyle \sum_{ii^\prime j j^\prime} |\tilde{e}_i \otimes \tilde{e}_{i^{\prime}} \rangle \langle \tilde{e}_j \otimes \tilde{e}_{j^{\prime}}| \otimes \mathcal T_{*t} (\rho^\frac{1}{2} |\tilde{e}_i \otimes \tilde{e}_{i^{\prime}} \rangle \langle \tilde{e}_j \otimes \tilde{e}_{j^{\prime}}|\rho^\frac{1}{2} ) \\
= &\frac{1}{pq}\displaystyle \sum_{\substack{ ii^\prime jj^\prime \\n n^\prime  r r^\prime} } \tilde{\rho}^{\frac{1}{2}}_{ii^\prime}\tilde{\rho}^{\frac{1}{2}}_{jj^\prime} \overline{\omega}_{p}^{ni} \overline{\omega}_{q}^{n^\prime i^\prime} \omega_{p}^{rj} \omega_{q}^{r^\prime j^\prime} |e_n \otimes e_{n^\prime} \rangle \langle e_r \otimes e_{r^\prime} | \otimes \mathcal T_{*t} ( |\tilde{e}_i \otimes \tilde{e}_{i^{\prime}} \rangle \langle \tilde{e}_j \otimes \tilde{e}_{j^{\prime}}| ) \\
=&\frac{1}{(pq)^2}\displaystyle \sum_{\substack{ ii^\prime jj^\prime \\n n^\prime  r r^\prime \\ NN^\prime RR^\prime}}\tilde{\rho}^{\frac{1}{2}}_{ii^\prime}\tilde{\rho}^{\frac{1}{2}}_{jj^\prime} \overline{\omega}_{p}^{(n+N)i}\overline{\omega}_{q}^{(n^\prime+N^\prime)i^\prime}  \omega_{p}^{(r+R)j} \omega_{q}^{(r^\prime+R^\prime)j^\prime}\times \\ & |e_n \otimes e_{n^\prime} \rangle \langle e_r \otimes e_{r^\prime} | \otimes \mathcal T_{*t}\left( |e_N \otimes e_{N^\prime} \rangle \langle e_R \otimes e_{R^\prime} | \right ) \\
=&\frac{1}{pq}\sum_{\substack{n n^\prime  r r^\prime \\ NN^\prime RR^\prime\\M M^\prime}} \Phi_{M,M^\prime}(t) \beta(n,N,n^\prime,N^\prime) \times \\ &\beta(r,R,r^\prime,R^\prime)|e_n \otimes e_{n^\prime} \rangle \langle e_r \otimes e_{r^\prime} |  \otimes |e_{N+M} \otimes e_{N^\prime+M^\prime} \rangle \langle e_{R+M} \otimes e_{R^\prime+M^\prime} \\
=&\frac{1}{pq}\sum_{mm^\prime}\Phi_{m,m^\prime}(t)|u_{mm^\prime}\rangle \langle u_{mm^\prime}|, 
\end{aligned}
\end{eqnarray*} where  $u_{mm^\prime}=\frac{1}{\sqrt{pq}} \sum_{ll^\prime LL^\prime} \beta(l,L,l^\prime,L^\prime) ( e_l \otimes e_{l^\prime}) \otimes	(e_{L+m}\otimes e_{L^\prime +m^\prime}),$ and $\beta(l,L,l^\prime,L^\prime)=  \frac{1}{\sqrt{pq}}\sum_{ii^\prime}\tilde{\rho}^{\frac{1}{2}}_{ii^\prime} \overline{\omega}_{p}^{(l+L)i}\overline{\omega}_{q}^{(l^\prime+L^\prime)i^\prime}.$

Direct computations show that the $\rho$-adjoint semigroup, with respect to any $\rho$ of the form (\ref{invariant-state}),  coincide with $\tilde{\mathcal T}_{*}$ given by Lemma \ref{diag-backward-qms}. In a similar way we get 
\begin{eqnarray}
\begin{aligned}
\tilde{\Omega}_{t}= \frac{1}{pq}\sum_{m,m'}{\Phi}_{p-m,q-m'}(t) |u_{mm'}\rangle\langle u_{mm'}|.
\end{aligned}
\end{eqnarray}  It follows that the Quantum Entropy Production Rate in any invariant state $\rho$ of the form (\ref{invariant-state}) coincides with the one given by Theorem \ref{entropy-pro-expression}.

\begin{theorem}\label{equiv-db-ep=0}
Let ${\mathcal T}_{*}$ a circulant qms with GKSL generator $\mathcal L_*$ of the form (\ref{cycle-rep-gen}), then the following are equivalent: 
\begin{itemize}

\item[(i)] ${\mathcal T}_*$ satisfies a quantum detailed balance condition with respect to any invariant state $\rho$ of the form (\ref{invariant-state}), 

\item[(ii)] $\alpha(m,n)=\alpha(p-m, q-n)$ for all $(m,n)\in{\mathbb Z}_{p}\times {\mathbb Z}_{q}$, 

\item[(iii)] the Quantum Entropy Production Rate of ${\mathcal T}_{*}$ with respect to any stationary state  $\rho$ of the form (\ref{invariant-state}) equals zero, i.e., $e_{p}({\mathcal T}_{*}, \rho) = 0.$
\end{itemize}
\end{theorem}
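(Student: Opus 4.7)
The plan is to route all three equivalences through two tools that are already available at this point in the paper: the explicit formula for the Quantum Entropy Production Rate given by Theorem \ref{entropy-pro-expression} (which the computation immediately preceding the theorem has just shown to persist for every invariant state $\rho$ of the form (\ref{invariant-state})), and the weighted detailed balance identification of the difference $\tilde{\mathcal{L}}-\mathcal{L}$ that produced equation (\ref{qdb-alphas}). Since both inputs are already $\rho$-independent among invariant states of the form (\ref{invariant-state}), the three conditions collapse to a single symmetry on the coefficients $\alpha$.

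For the equivalence (i) $\Leftrightarrow$ (ii), I would invoke the computation carried out a few paragraphs above the theorem, where $\tilde{\mathcal{L}}-\mathcal{L}$ is shown to have the structure (\ref{def-weighted-db}) with weights $q(i,j)=\alpha(i,j)/\alpha(p-i,q-j)$. Corollary 2 of \cite{a-f-q} then states that quantum detailed balance is equivalent to every weight being equal to one, which is exactly (\ref{qdb-alphas}), i.e.\ condition (ii). Note that the operators $L_{ij}$ and the weights $q(i,j)$ depend only on the cycle coefficients $\alpha(i,j)$ and not on the specific invariant state chosen, so the argument is uniform in $\rho$.

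For (ii) $\Rightarrow$ (iii), the substitution of $\alpha(m,n)=\alpha(p-m,q-n)$ into the explicit formula of Theorem \ref{entropy-pro-expression} makes every summand $\bigl(\alpha(m,n)-\alpha(p-m,q-n)\bigr)\log\bigl(\alpha(m,n)/\alpha(p-m,q-n)\bigr)$ vanish term by term, yielding $e_p(\mathcal{T}_*,\rho)=0$. For the converse (iii) $\Rightarrow$ (ii), I would apply the elementary inequality $(a-b)\log(a/b)\geq 0$ for $a,b>0$, with equality if and only if $a=b$: every term in the formula is therefore non-negative, and the sum can vanish only if each individual term vanishes, which forces $\alpha(m,n)=\alpha(p-m,q-n)$ for every $(m,n)\in\mathbb{Z}_p\times\mathbb{Z}_q$.

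I do not anticipate any serious obstacle, since the analytical work is already packaged in Theorem \ref{entropy-pro-expression} and in the weighted detailed balance calculation leading to (\ref{qdb-alphas}). The only point that calls for a brief comment is consistency with the convention used in the definition of the relative entropy when some $\alpha(m,n)$ vanishes: if $\alpha(m,n)=0$ but $\alpha(p-m,q-n)>0$ (or vice versa), the corresponding term is interpreted as $+\infty$ (in line with the $\infty$-clause in the definition of $S(\eta,\rho)$), so both (ii) and (iii) fail simultaneously; if both vanish, the term is read as zero via the standard convention $0\log 0=0$. With this understood, the three-way equivalence is immediate.
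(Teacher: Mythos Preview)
Your argument is correct and follows exactly the paper's approach: the equivalence (i)\,$\Leftrightarrow$\,(ii) is reduced to Corollary~2 of \cite{a-f-q} via (\ref{qdb-alphas}), and (ii)\,$\Leftrightarrow$\,(iii) is read off from the explicit formula of Theorem~\ref{entropy-pro-expression}. You have in fact supplied more detail than the paper itself, which simply cites these two ingredients; in particular, your spelling out of the termwise non-negativity $(a-b)\log(a/b)\geq 0$ and the handling of vanishing coefficients are welcome clarifications that the paper leaves implicit.
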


\begin{proof}
The equivalence of $(i)$ and $(ii)$ follows from Corollary 2 in Ref.\cite{a-f-q}, see (\ref{qdb-alphas}). And the equivalence of $(ii)$ with $(iii)$ follows from Theorem \ref{entropy-pro-expression}. 
\end{proof}

\begin{remark}
\begin{itemize}
\item[(i)] Theorems \ref{entropy-pro-expression} and \ref{equiv-db-ep=0} have a direct generalization to the case of any finite number of cycles (or cyclic factors in the abelian group $G$). 
\item[(ii)] We remark that in the case of a separable probability distribution $\alpha(i,j)=\alpha_{p}(i)\alpha_{q}(j)$, a direct computation using Lemmata \ref{diag-forward-qms}, \ref{diag-backward-qms} shows that the states $\Omega_{t}, \; \tilde{\Omega}_{t}$ are separable. Indeed, $\Omega_{t}=\Omega_{p}(t)\otimes \Omega_{q}(t)$, with 
\[\Omega_{s}(t)=\frac{1}{s}\sum_{i}\Phi_{s}(i,t)|u_{s}(i)\rangle\langle u_{s}(i)|,\] where $\Phi_{s}(i,t)=\displaystyle\sum_{j}\omega^{ij} e^{t\lambda_{j}(\alpha_{s}})$, \; $\displaystyle\lambda_{j}(\alpha_{s})=\sum_{l}\alpha_{s}(i)\overline{\omega}^{lj}$, \; \; \textrm{and} \; \; $u_{s}(k)=\displaystyle\frac{1}{\sqrt{s}}\sum_{n}|e_{n}\rangle\langle e_{n+k}|$, $s=p,q$. 
\end{itemize}
\end{remark}

\section*{Acknowledgement} The financial support from CONACYT-Mexico and Ministero degli Affari Esteri-Italy, through the joint research project ``Din\'amica Estoc\'astica con Aplicaciones en F\'isica y Finanzas", is gratefully acknowledged.

\end{document}